\newtheorem{definition}{Definition}
\newtheorem{lemma}{Lemma}
\newcommand{\hide}[1]{}
\newcommand{\bit}{\begin{compactitem}}
\newcommand{\eit}{\end{compactitem}}
\newcommand{\ben}{\begin{compactenum}}
\newcommand{\een}{\end{compactenum}}
\newcommand{\method}{\textsc{AutoAudit}\xspace}
\newcommand{\methodsec}{\textsc{AA-Smurf}\xspace}
\newcommand{\methodthi}{\textsc{AA-Discover}\xspace}
\newcommand{\methodfor}{\textsc{AA-Focus}\xspace}
\newenvironment{proof}{\emph{Proof.}}{\hfill$\blacksquare$}
\begin{document}

\title{\method: Mining Accounting and Time-Evolving Graphs}


\makeatletter
\newcommand{\newlineauthors}{%
  \end{@IEEEauthorhalign}\hfill\mbox{}\par
  \mbox{}\hfill\begin{@IEEEauthorhalign}
}
\makeatother

\author{
\IEEEauthorblockN{Meng-Chieh Lee}
\IEEEauthorblockA{National Chiao Tung University \\
jeremy08300830.cs06g@nctu.edu.tw}
\and
\IEEEauthorblockN{Yue Zhao}
\IEEEauthorblockA{Carnegie Mellon University \\
zhaoy@cmu.edu}
\and
\IEEEauthorblockN{Aluna Wang}
\IEEEauthorblockA{Carnegie Mellon University \\
aluna@cmu.edu}
\and
\IEEEauthorblockN{Pierre Jinghong Liang}
\IEEEauthorblockA{Carnegie Mellon University \\
liangj@andrew.cmu.edu}
\newlineauthors
\IEEEauthorblockN{Leman Akoglu}
\IEEEauthorblockA{Carnegie Mellon University \\
lakoglu@andrew.cmu.edu}
\and
\IEEEauthorblockN{Vincent S. Tseng}
\IEEEauthorblockA{National Chiao Tung University \\
vtseng@cs.nctu.edu.tw}
\and
\IEEEauthorblockN{Christos Faloutsos}
\IEEEauthorblockA{Carnegie Mellon University \\
christos@cs.cmu.edu}
}





\maketitle

\begin{abstract} 


How can we spot money laundering in large-scale graph-like accounting datasets? How to identify the most suspicious period in a time-evolving accounting graph? What kind of accounts and events should practitioners prioritize under time constraints? To tackle these crucial challenges in accounting and auditing tasks, we propose a flexible system called \method, which can be valuable for auditors and risk management professionals. To sum up, there are four major advantages of the proposed system:
(a) {\em ``Smurfing'' Detection}, spots nearly 100\% of injected money laundering transactions automatically in real-world datasets.
(b) {\em Attention Routing}, attends to the most suspicious part of time-evolving graphs and provides an intuitive interpretation.
(c) {\em Insight Discovery}, identifies similar month-pair patterns proved by ``success stories'' and patterns following Power Laws in log-logistic scales.
(d) {\em Scalability and Generality}, ensures \method scales linearly and can be easily extended to other real-world graph datasets.
Experiments on various real-world datasets illustrate the effectiveness of our method. To facilitate reproducibility and accessibility, we make the code, figure, and results public at \url{https://github.com/mengchillee/AutoAudit}.

\end{abstract}
\begin{IEEEkeywords}
Time-Evolving Graph, Graph Mining, Anomaly Detection
\end{IEEEkeywords}

\section{Introduction}
\label{sec:intro}


Given a complicated accounting dataset, how can we spot the most suspicious structures and provide practical advice to accountants? Given the graphs rendered from accounting datasets are always directed, weighted, and time-evolving, how can we identify the time frames that exhibit potential correlations and provide the cause? How can we filter out useless information and focus on important signals? We propose \method (AA) to address three major challenges on mining such time-evolving accounting graphs.

The first challenge relates to facilitating the investigation of money laundering crimes (MLCs)--the importance of leveraging machine learning techniques has been recognized in numerous literature \cite{chadha2018handling, paula2016deep, savage2016detection, liflowscope}.
Among all money laundering crimes, ``Smurfing'' \cite{reuter2003money} is one of the most frequent cases \cite{madinger2011money, corselli2020italy, schneider2004money}. As shown in Figure~\ref{fig:smurfing}, it involves the use of multiple intermediaries for making small cash deposits, buying monetary instruments, or bank drafts in amounts under the reporting threshold. Establishing reasonable suspicion usually involves reading and analyzing a massive volume of transactions and thousands of documents, which is exceptionally challenging and time-consuming. Consequently, there is necessity to automate the process of reviewing financial records so that ``Smurfing'' type frauds can be spotted. We propose \methodsec, a parameter-free and unsupervised method, to facilitate the detection of ``Smurfing'' in the accounting data by minimizing the description length of adjacency matrix.

\begin{figure}[!t]
\centering
\includegraphics[width=0.32\textwidth]{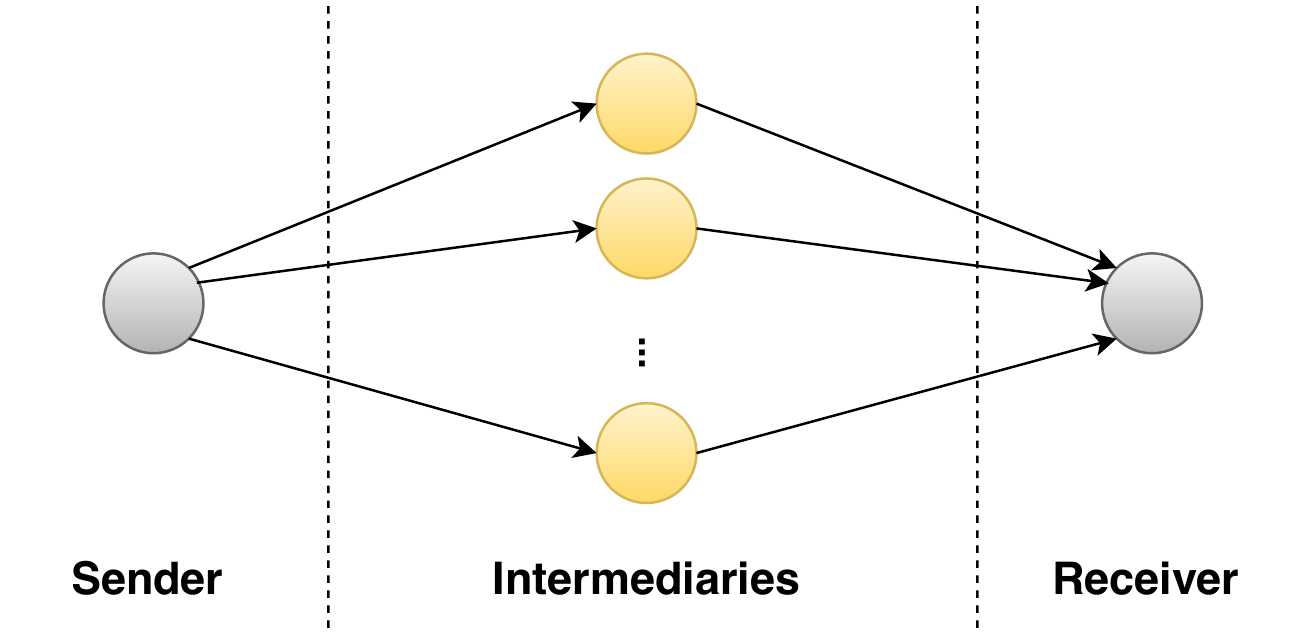}
\vspace{-0.1in}
\caption{\footnotesize {\bf ``Smurfing''}: One of the most frequent types of money laundering}
\label{fig:smurfing}
\end{figure}

The second challenge concerns finding the starting point for the ``test of transactions.'' Due to the sheer volume of financial transactions, limited resources for audits, and ever-changing business environment and policies, auditors often find it challenging to determine a starting point for review when examining transactions and documentation. Although there have been studies in detecting anomalies on time-evolving graphs \cite{ranshous2015anomaly, eswaran2018spotlight, DBLP:conf/kdd/ManzoorMA16, akoglu2010event, eswaran2018sedanspot}, most of them do not offer this information with intuitive explanations. As a result, effective tools that can process account behavior across time to flag critical time-periods for further investigation are needed. We propose \methodfor, by encoding the anomaly score of accounts in each focus-plot over time into sketches, to direct users to the most critical changing points in a long period and explain why they are suspicious. 

The third challenge is discovering the potential properties of time-evolving graphs.
Obtaining and understanding these temporal-correlation patterns can be extremely useful since periodic financial bookkeeping data, visualized in graphs, are often used for internal control purposes. 
We propose \methodthi, to not only bring these temporal patterns to light, but also to ferret out the most effective contributing factors.
Moreover, by using \methodthi, we discover a whole new phenomenon that the log-logistic can fit well for many accounting datasets.
For space limitation, \methodthi along with experimental results are presented in Appendix.



To sum up, we propose \method in this paper---a systematic method for detecting anomalies not only in accounting datasets, but also in other real-world datasets that shares similar characteristics. Figure ~\ref{fig:f1} shows \methodsec automatically reorder the adjacency matrix and put the most likely ``Smurfing'' patterns in the front (highlighted in red).
Moreover, as shown in Figure ~\ref{fig:f3}, \methodfor not only detects the exact period when a significant change occurs, it also finds out why Enron's CEO has been busy solving emergent events.


The advantages of our method are:
\bit
\item {\bf ``Smurfing'' Detection}: \methodsec, an unsupervised and parameter-free algorithm, can detect injected ``Smurfing'' pattern in real-world datasets.
\item {\bf Attention Routing}: \methodfor identifies most suspicious periods in time-evolving graphs with explanations.
\item {\bf Insight Discovery}: \methodthi unearths three month-pairs with high correlation, proved by ``success stories'', and additional patterns of accounting datasets. It is put into Appendix due to space limitation.
\item {\bf Scalability and Generality}: \method scales linearly and can be generalized on other real-world graph datasets, such as Enron Email and Czech Financial datasets.
\eit
{\bf Reproducibility}: Our implemented source code and ``Smurfing'' generator are made public\footnote{\url{https://github.com/mengchillee/AutoAudit}}.


The rest of this paper is organized as follows. First, We introduce the problem statement and detail our proposed method in Section~\ref{sec:meth}. Then, experimental results are then presented in Section~\ref{sec:exp}. Section~\ref{sec:concl} concludes. In Appendix, we briefly review the background and related work in Section~\ref{sec:background}, and provide additional experiment results.

\begin{figure}[!htb]
\centering
\includegraphics[width=0.45\textwidth]{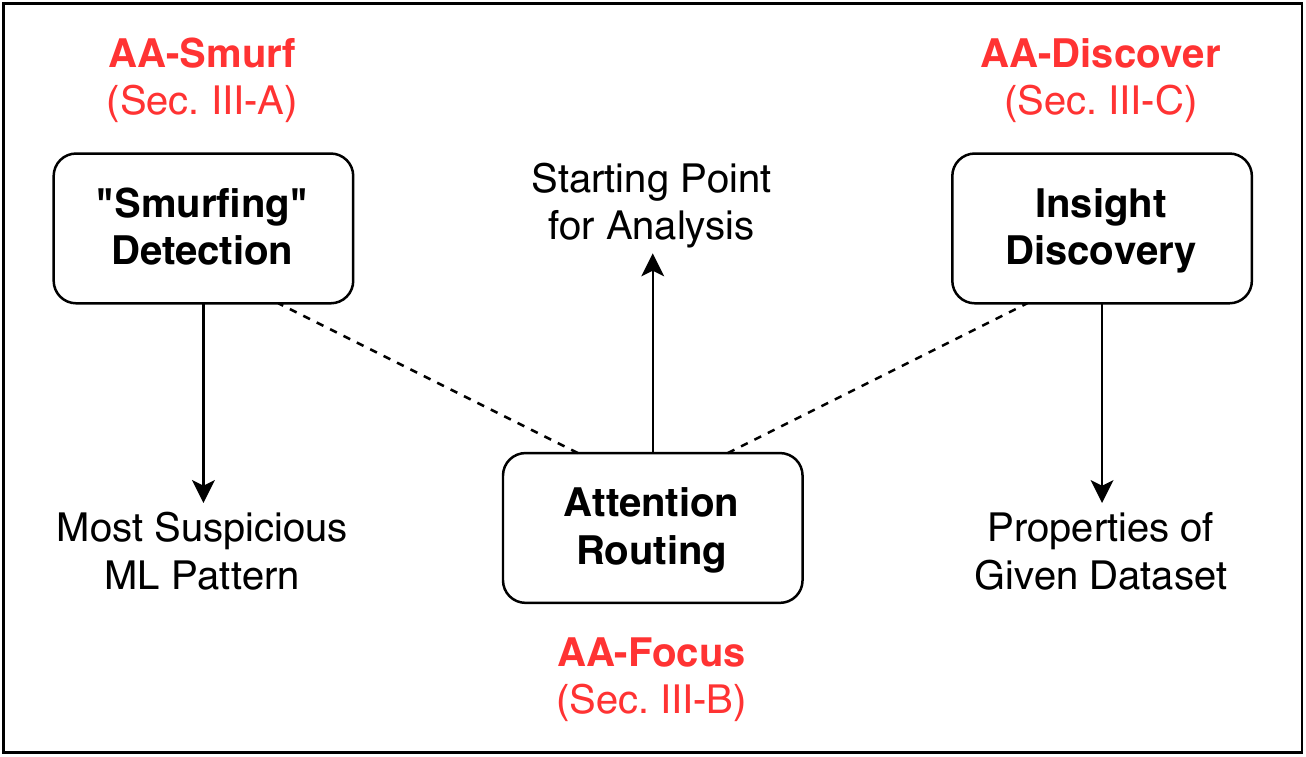}
\vspace{-0.1in}
\caption{{\bf \method}: A systematic overview of \method}
\label{fig:framework}
\end{figure}

\begin{figure*}[!t]
\centering
\subfloat[\methodsec detects and spots ``Smurfing'' successfully]{\label{fig:f1}\includegraphics[scale=0.95]{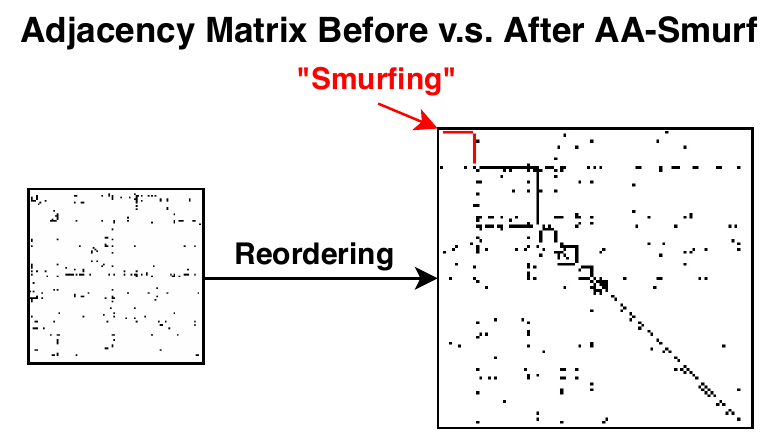}} \qquad
\subfloat[\methodfor detects the time of severe change and give the explanation ]{\label{fig:f3}\includegraphics[scale=0.85]{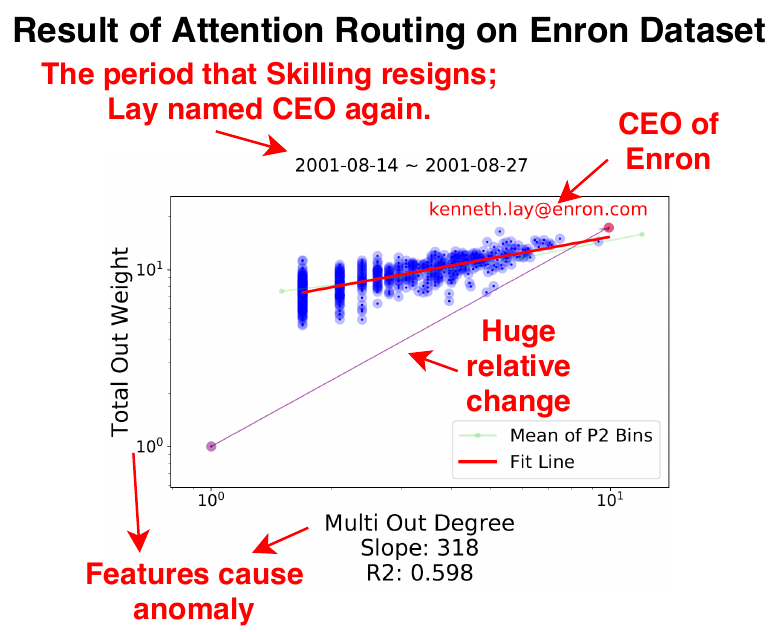}}
\caption{\method contains \methodsec (left) to detect ``Smurfing'' and spots by matrix reordering, and \methodfor (right) to detect the severe event in Enron dataset and explains by the 2-D plot with huge relative change.}
\label{fig:summary}
\end{figure*}

\section{Proposed Method}
\label{sec:meth}

As shown in Figure~\ref{fig:framework}, \method solves these three major challenges.
We present the details of \methodsec on detecting ``Smurfing'' in Section~\ref{ssec:num1}, and then the design of attention routing in Section~\ref{ssec:num3}. Finally, discovering the properties of given dataset is discussed in Appendix~\ref{app:num2}. 

\subsection{\methodsec for ``Smurfing'' Detection} \label{ssec:num1}
\subsubsection{Problem Definition}
We focus on the first placement stage of ``Smurfing'', and define the problem as follows:
\begin{definition}[``Smurfing'']
Given a time-evolving accounting dataset without ground truth, the goal is to detect ``Smurfing'', where there are one sender, one receiver, and $k$ intermediaries as shown in Figure~\ref{fig:smurfing}. The task should render results understandable to auditors and risk management professionals.
\end{definition}
There are three basic properties of ``Smurfing'' derived from the information in \cite{levi2006money}, which can be used for detection:
\bit
\item {\bf Intermediary Size:} The more the intermediaries, the more suspicious.
\item {\bf Inside Purity:} The less the interactions between intermediaries, the more suspicious.
\item {\bf Outside Purity:} The less the intermediaries connected or connecting to normal accounts, the more suspicious.
\eit
Although intermediaries usually camouflage themselves by setting up connections with normal accounts, they try their best to avoid interacting with other intermediaries. If there exists interactions between intermediaries, it is hard to conceal information and transactions. Moreover, once one of the intermediaries is exposed, it is much easier to track others.


\subsubsection{Overall Algorithm}
Algorithm~\ref{algo:robotacct} gives the steps of \methodsec. In line 1, we first collect all possible index sets of sender, intermediaries and receiver $\mathcal{P}$, if the number of intermediaries is higher than 2. In line 3-17, we iteratively identify the best ``Smurfing'' patterns. The score function is given in line 9, consisting of purity and encoding cost. In each iteration, all the sets $I^{'} \in \mathcal{P}$ are examined to find the highest score $s$ in line 4-10. If no pattern generating lower cost is to be found, in line 18, we identify the final result by picking the subset $\mathcal{I}^{*}$ of $\mathcal{I}$, Index sets containing $\{I_1, I_2, ...\}$, with encoding cost 10\% greater than the minimum (early stopping) as described in \cite{DBLP:conf/sigmod/MatsubaraSF14}. We then detail each of the subroutines in the following paragraphs.

\begin{algorithm}[htbp]
\SetAlgoVlined
\SetAlgoLined
\LinesNumbered
\KwData{Adjacency matrix $V$ from graphs $G$}
\KwResult{Reordered matrix with highest score}
$\mathcal{P} = GetSets(V), \mathcal{I} = []$\;
$c = Cost_{T}(V, [])$\;
\While{True}{
  \For{$I^{'} = (i^{s}, i^{m}, i^{r}) \in \mathcal{P} \backslash \mathcal{I}$}{
        $\mathcal{I}^{'} = \mathcal{I} \cup {I^{'}}$\;
        /* Matrix Reordering \S~\ref{par:matre} */ \\
        $V^{'} = Reorder(V, \mathcal{I}^{'})$ \;
        /* Encoding Cost \S~\ref{par:encoding} */ \\
        $c^{'} = Cost_{T}({V}^{'}, \mathcal{I}^{'})$ \;
        /* Purity Calculation \S~\ref{par:purity} */ \\
        $p^{'} = Purity({V}^{'}, \mathcal{I}^{'})$ \;
        $s = p^{'} * (c - c^{'}) / c$\;
  }
  \If{No set with cost lower than $c$}{
      Break\;
  }
  Pick the best $V^{*}$ and $I^{*}$ from $\mathcal{P}$ with the largest $s$\;
  Add $I^*$ into $\mathcal{I}$\;
  $c = Cost_{T}({V}^{*}, \mathcal{I})$\;
}
Pick $\mathcal{I}^{*} \in \mathcal{I}$ with cost 10\% greater than $c$\;
Return $Reorder(V, \mathcal{I}^{*})$\;
\caption{\methodsec: Reorder the adjacency matrix and display the most suspicious patterns \label{algo:robotacct}}
\end{algorithm}

\subsubsection{Matrix Reordering} \label{par:matre}
Our method is based on the intuition that if an adjacency matrix of graph contains ``Smurfing'' patterns, after reordering, it is much easier to be compressed. Based on the three aforementioned properties, the optimal matrix containing one ``Smurfing'' pattern should be as follows:
\begin{equation}
    V_{i, j} =
    \begin{cases}
        1 & \text{if } i = 0 \text{ and } j \in [1, k - 1] \\
        1 & \text{if } j = k \text{ and } i \in [1, k - 1] \\
        0 & \text{otherwise}
    \end{cases}
    i, j \in [0, n]
\end{equation}
where $n$ is the number of accounts and $k$ is the number of intermediaries.
An illustration is shown by the sub-matrix $A$ in Figure~\ref{fig:mtr}, where it only contains totally $2k$ elements. The matrix orders of row and column are the same in our method. As shown in Algorithm~\ref{algo:reorder}, given a matrix $V$ at iteration $J$, where $j = 1, ..., J$, we add the indexes of sender, intermediaries and receiver into order $\mathcal{O}$ respectively in line 2-6 iteratively. We then reorder the row and column in $V$ by $\mathcal{O}$ in line 8. 

\begin{algorithm}[htbp]
\SetAlgoVlined
\SetAlgoLined
\LinesNumbered
\KwData{Matrix $V$, and index set of sender, intermediaries and receiver $\mathcal{I}$}
\KwResult{Reordered matrix $V^{'}$}
Initialize order $\mathcal{O} = []$ \;
\For{$I_j = (i^s_j, i^m_j, i^r_j) \in \mathcal{I}$}{
    Add sender index $i^s_j$ to $\mathcal{O}$\;
    Add intermediaries indexes $i^m_j$ to $\mathcal{O}$\;
    Add receiver index $i^r_j$ to $\mathcal{O}$\;
}
Add the rest indexes into $\mathcal{O}$\;
$V^{'} = $ reorder the rows and columns of $V$ by $\mathcal{O}$\;
Return $V^{'}$\;
\caption{Reorder: Reorder the matrix at iteration $J$ by the indexes of sender, intermediaries and receiver \label{algo:reorder}}
\end{algorithm}

\subsubsection{Encoding Cost of Matrix Reordering} \label{par:encoding}
To encode the reordered matrix, we use Minimum Description Length (MDL) principle \cite{rissanen1978modeling}. It finds the best way to compress matrices based on the aforementioned properties of ``Smurfing'' patterns. Based on this basic idea, we design a method to compress the binary adjacency matrix. As shown in Figure~\ref{fig:mtr}, given a reordered adjacency matrix $V_J$ and index information $\mathcal{I}$ at iteration $J$, we denote intermediaries number as $k_j = |i^{m}_j|$ and start indexes as $r_j = \sum_{j^{'}=1}^{j-1} |I_{j^{'}}|$ of ``Smurfing'' patterns, and separate the reordered adjacency matrix into sub-matrices as follows:
\begin{equation}
\begin{array}{l}
    A_j(V_J, I_j) = V_J[r_j:r_j+k_j][r_j:r_j+k_j] \\
    B_j(V_J, I_j) = V_J[r_j:r_j+k_j][r_j+k_j+1:n] \\
    C_j(V_J, I_j) = V_J[r_j+k_j+1:n][r_j:r_j+k_j]
\end{array}
\end{equation}
where $j = 1, ..., J$ and $I_j \in \mathcal{I}$. Sub-matrix $A_j$ denotes the potential ``Smurfing'' pattern separated out by reordering. Sub-matrix $B_j$ denotes intermediaries sending money, and sub-matrix $C_j$ denotes intermediaries receiving money. Sub-matrix $D$ is used to denote errors which cannot be compressed by found patterns, and is defined as follows:
\begin{equation}
    D(V_J, I_J) = V_J[r_J+k_J+1:n][r_J+k_J+1:n]
\end{equation}
Based on the inside purity and outside purity properties, in order to have fewer interactions, the encoding cost of sub-matrices $A$, $B$ and $C$ can be computed as follows:
\begin{equation}
\begin{array}{l}
    Cost_{A}(V_J, \mathcal{I}_J) = (\sum_{j=1}^{J}{|A_j(V_J, I_j)| - 2k_j) * (2\lg{(k_j - 1)})} \\
    Cost_{B}(V_J, \mathcal{I}_J) = (\sum_{j=1}^{J}{|B_j(V_J, I_j)|) * (\lg{n} + \lg{k_j})} \\
    Cost_{C}(V_J, \mathcal{I}_J) = (\sum_{j=1}^{J}{|C_j(V_J, I_j)|) * (\lg{n} + \lg{k_j})}
\end{array}
\end{equation}
This function encodes the position of non-zero elements in the given matrix. We subtract $2k_j$ for each $A_j$ because there is no need to record the occurrence of intermediaries. To ensure that most zeros can be compressed in $A_j$, $B_j$, and $C_j$, sub-matrix $D$ should contain as few zeros as possible, which can be calculated as follows:
\begin{equation}
    Cost_{D}(V_J, \mathcal{I}_J) = (|\mathds{J} - D(V_J, I_J)|) * (2\lg{n})
\end{equation}
where $\mathds{J}$ is a matrix of ones. This means that instead of encoding the positions of non-zeros, we encode the positions of zeros in the sub-matrix $D_j$. To encode real numbers, we use universal code length $lg^{*}(x)\approx2lg(x)+1$. The number of found patterns is encoded by $lg^{*}(J)$. The indexes for each sender, receiver and intermediaries need $lg(n)$ for each to be encoded. The total cost can then be formed as follows:
\begin{equation}
\begin{array}{l}
    Cost_{T}(V_J, \mathcal{I}_J) = Cost_{A}(V_J, \mathcal{I}_J) + Cost_{B}(V_J, \mathcal{I}_J) \\
    \quad  + Cost_{C}(V_J, \mathcal{I}_J) + Cost_{D}(V_J, \mathcal{I}_J) \\ 
    \quad  + lg^{*}(J) + |\mathcal{I}_J| * \lg{n}
\end{array}
\end{equation}
To compute the final score, we use compression rate instead of the actual cost. The compression rate is calculated by normalizing the difference of original and compressed MDL by the original MDL, i.e, $(c-c^{'})/c$, where $c$ denotes the encoding cost of the last iteration and $c^{'}$ denotes the current encoding cost.

\begin{figure}[!t]
\centering
\includegraphics[width=0.22\textwidth]{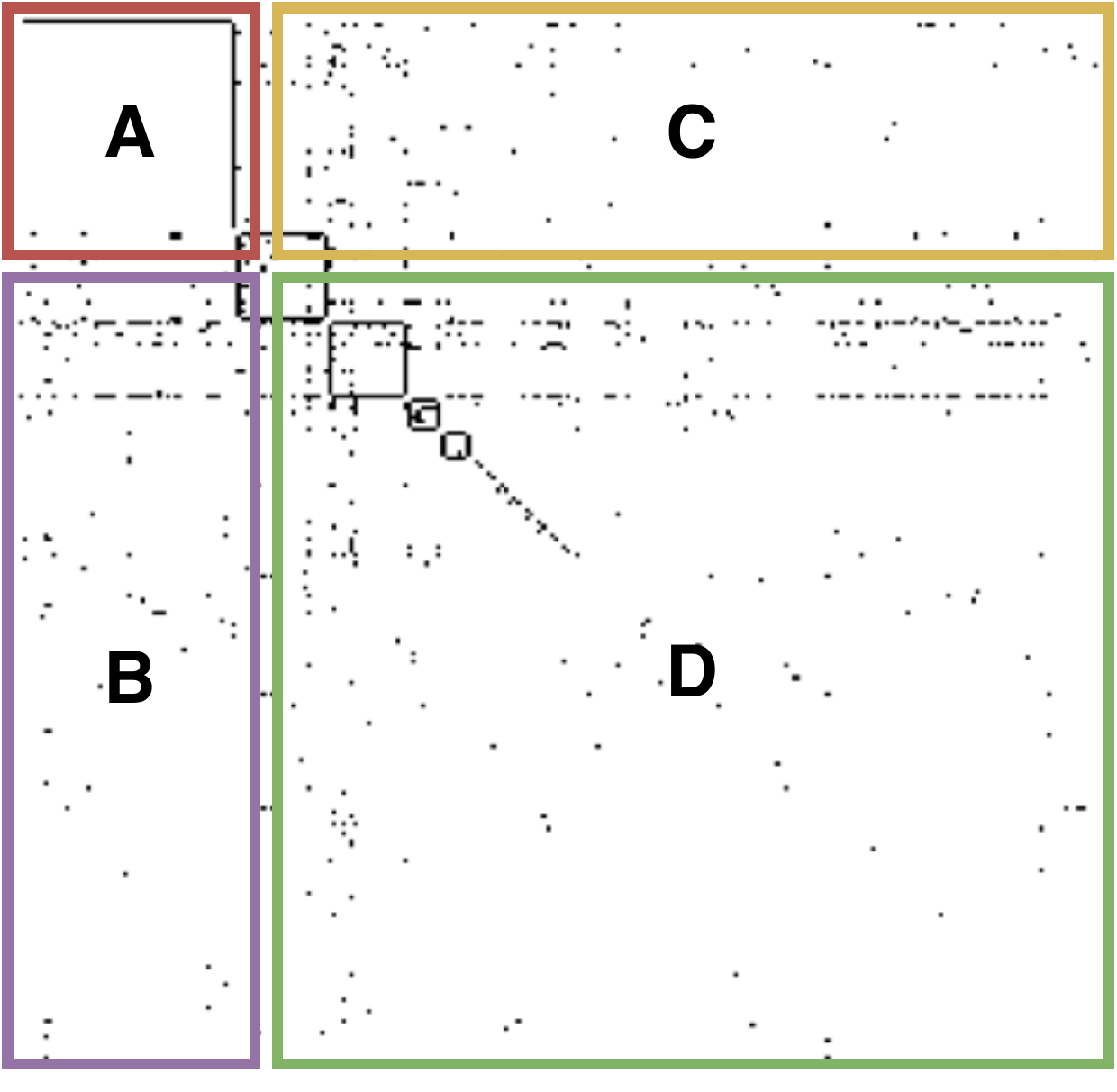}
\vspace{-0.1in}
\caption{Reordered binary adjacency matrix with ``Smurfing'', where sub-matrix A denotes ``Smurfing'' and the inside purity, sub-matrix B and C denote the outside purity, and sub-matrix D denotes the errors}
\label{fig:mtr}
\end{figure}

\subsubsection{Purity Calculation} \label{par:purity}
It is obvious that if the numbers of intermediaries increase, the matrix tends to have a higher compression rate. When only using compression rate to select an anomaly, the one with more intermediaries will have a higher chance to be selected, even when its purity is low. To solve this issue, we develop a new score called purity. At iteration $J$, the purity score $p$ can be computed as follows:
\begin{equation}
    {\textstyle Purity(V_J, \mathcal{I}_J) = \sum_{j=1}^{J}{\frac{2k_j}{|A_j(V_J, I_j)| + |B_j(V_J, I_j)| + |C_j(V_J, I_j)|}}}
\end{equation}
where $0 < p \leq 1$. In the best case, where $p=1$, the edges will only connect from the sender to intermediaries, and from intermediaries to the receiver. We then multiply the compression rate by purity to get the final score $s$ of a given index set.


\subsubsection{Complexity Analysis}
\begin{lemma}
\methodsec takes time
\begin{equation}
    O(nz+|\mathcal{P}|^{2})
\end{equation}
where $z$ denotes the number of non-zero elements and $|\mathcal{P}|$ denotes the number of possible index sets.
\end{lemma}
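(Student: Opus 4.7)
The plan is to decompose the runtime of Algorithm~\ref{algo:robotacct} into (i) a one-time preprocessing cost paid by $GetSets$ and the initial $Cost_T$ evaluation, and (ii) the cost of the nested loops over candidate index sets. I would argue that preprocessing is $O(nz)$ and the nested loops contribute $O(|\mathcal{P}|^2)$, which gives the claimed bound by simply adding the two.

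First, I would analyze line 1. $GetSets(V)$ enumerates every candidate triple (sender, intermediaries, receiver) consistent with the adjacency structure; to do so it must touch each non-zero entry at least once, and for each row it inspects the non-zero column indices to form the intermediary set. Iterating over all $n$ rows and scanning their non-zero neighbors runs in $O(nz)$, and the initial $Cost_T(V,[])$ on line 2 is a one-shot sweep over the same non-zero entries, which is dominated by $O(nz)$. This contributes the first term of the bound.

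Next, I would bound the two loops. The while-loop in lines 3--17 adds exactly one index set to $\mathcal{I}$ per iteration (or terminates), so it runs at most $|\mathcal{P}|$ times. The for-loop in lines 4--10 iterates over $\mathcal{P}\setminus\mathcal{I}$, also at most $|\mathcal{P}|$ times. The key step is to argue that each inner iteration costs $O(1)$ amortized. I would do this by observing that $Reorder$ only needs to splice the new triple $I'$ into the existing order $\mathcal{O}$, and that $Cost_A$, $Cost_B$, $Cost_C$, $Cost_D$, and $Purity$ can all be maintained incrementally: adding one candidate triple changes only the three sub-blocks indexed by $r_j$ and $k_j$, so the additive contribution to the cost and to the purity score is computed from cached aggregates (row/column non-zero counts precomputed during $GetSets$) in constant time. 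Under this incremental view, the nested loops contribute $O(|\mathcal{P}|^2)$, and the selection of the best candidate on line 14 and the final 10\% filter on line 18 are dominated by this.

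The hard part will be justifying that the per-iteration work inside the for-loop really is $O(1)$. A naive implementation of $Reorder$ or of the four $Cost$ terms would touch up to $n^2$ entries and inflate the bound to $O(|\mathcal{P}|^2 \cdot n^2)$. To make the $O(1)$ claim go through I would precompute, during the initial $O(nz)$ pass, the row- and column-marginal sums of $V$ as well as the number of non-zeros inside every candidate's $A$, $B$, and $C$ blocks; then each evaluation on lines 7--9 reduces to arithmetic on a constant number of cached scalars indexed by $(r_j,k_j)$. Provided those lookups are constant-time, combining the preprocessing and the nested-loop analysis yields the stated $O(nz + |\mathcal{P}|^2)$ complexity.
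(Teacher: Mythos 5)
Your decomposition matches the paper's own proof: an $O(nz)$ sparse pass to obtain the candidate index sets, plus an $O(|\mathcal{P}|^2)$ bound on the nested loops (the paper phrases the latter as the arithmetic-progression sum $\frac{(|\mathcal{P}|+1)|\mathcal{P}|}{2}$ that arises when each while-iteration removes only one set from $\mathcal{P}$). You are in fact more careful than the paper on the one genuinely delicate point---that each evaluation of $Reorder$, $Cost_{T}$, and $Purity$ inside the for-loop must cost amortized $O(1)$ for the bound to hold; the paper only remarks that reordering costs $O(|\mathcal{I}|)$ and implicitly treats the cost and purity computations as negligible, whereas your cached-marginals/incremental-update argument actually justifies that step.
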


\begin{proof}
By using the sparse matrix computation, the time complexity of obtaining possible index sets can be reduced to $O(nz)$, where $z$ denotes the number of non-zero elements. The worst case of the while loop in the algorithm is that every iteration we only remove one set from $\mathcal{P}$, taking $O(\frac{(|\mathcal{P}| + 1)\mathcal{P}}{2})$ to run (sum of an arithmetic progression). In practice, the number of iterations is linear to the input size, since each account cannot be reordered twice. If an account is identified as an intermediary, the possible sets involving it are removed. The complexity of reordering is $O(|\mathcal{I}|)$, where $|\mathcal{I}|$ is usually a small constant that is negligible. Thus, the complexity is $O(nz+|\mathcal{P}|^{2})$.
\end{proof}

\subsection{\methodfor for Attention Routing} \label{ssec:num3}

\subsubsection{Problem Definition}
In a huge graph dataset, finding the starting point for analysis can be extremely difficult. It becomes even more challenging when the dataset is dynamic, since neither analyzing journal entries entered at a specific time point nor the whole time span would be effective and efficient for detecting changes occurring only in a short period. Furthermore, it is equally important to offer viable explanations for these changes after detecting them. One thing worth noting is that these changes may not always be anomalies, they may also be related to other types of important events. Here we define the Attention Routing problem as follows:

\begin{definition}[Attention Routing]
Given time-evolving graph datasets, the goal is to detect sudden changes over time, and use the least resources to explain the most patterns.
\end{definition}

\subsubsection{Graph Construction} \label{par:graphcon}
We first split the time-evolving graph $G$ into $I$ graphs $G_i$ over time by sliding windows with overlap. In Algorithm~\ref{algo:aaar} line 1-4, we construct bipartite graphs $W_i$ for each graph $G_i$, where the source nodes on the left side denote each node in $G_i$ and the destination nodes on the right side denote explainable focus-plots. We focus on drawing visualizable and understandable 2-d focus-plots, which scatters each data point by only two features. There are total $\left(\begin{array}{c}q\\ 2\end{array}\right)$ focus-plots generated from the features of nodes, where $q$ denotes the number of extracted features from nodes. In each focus-plot, the anomaly score of each node is calculated by an off-the-shelf unsupervised anomaly detection model. We use Isolation Forest \cite{liu2008isolation} because of its efficiency and robustness. The edge weight $w^i$ of the bipartite graph $W_i$ is then set as the anomaly score of nodes for each focus-plot.

\begin{algorithm}[htbp]
\SetAlgoVlined
\SetAlgoLined
\LinesNumbered
\KwData{Graphs $G$ split from sliding window}
\KwResult{Time $t_a$ with the highest change score and $v$ 2-d focus-plots}
/* Graph Construction \S~\ref{par:graphcon} */ \\
\For{$G_i \in G$}{
    Extract node feature pairs $f_i$ from $G_i$ \;
    Construct bipartite graph $W_i$\;
}
/* Sketch Generation \S~\ref{par:sketchgen} */ \\
\For{$j = 1, ..., l$}{
    Randomly select $\mathcal{S}_j$ and $\mathcal{D}_j$\;
    Select $v$ focus-plots $\mathcal{V}_j$ from $\mathcal{D}_j$ by Eq.~\ref{eq:max}\;
}
Generate $l$-d sketch $K$ for all $\mathcal{S}$ and $\mathcal{V}$\;
/* Change Point Detection and Attention \S~\ref{par:changepoint} */ \\
\For{$i = t, ..., len(G)$}{
    Compute principal eigenvector $e_i$ by SVD from $K_{i-t:i-1}$\;
    Compute cosine distance between $e_i$ and $K_i$ as change score over time\;
}
Return $t_a$ with maximum change score and $v$ focus-plots\;
\caption{\methodfor: Attend to the most suspicious time period and explain by the focus-plots \label{algo:aaar}}
\end{algorithm}

\subsubsection{Sketch Generation} \label{par:sketchgen}
Once we have bipartite graphs for each timestamp, to detect change in the anomaly scores of each node, each bipartite graph is encoded into sketches to apply further analysis. We borrow the idea from SpotLight \cite{eswaran2018spotlight}. $l$ subsets of source $\mathcal{S}$ and $l$ subsets of destination $\mathcal{D}$ of bipartite graphs have been randomly selected with a given probability, where $l$ denotes the dimensions of sketches. The sketch value in dimension $j$ can then be calculated as follows:
\begin{equation}
    f(\mathcal{S}_j, \mathcal{D}_j) = \sum_{s_x \in \mathcal{S}}{\max_{d_y \in \mathcal{D}}{\sum_{i=1}^{I}{w^i_{s_x, d_y}}}}
\end{equation}
where $f$ denotes the sum of the edge with maximum weight linking from source subset to corresponding destination subset in all $W$. The total weight of edge $(s_x, d_y)$ in in all $W$ gives a global view of how well this focus-plot $d_y$ explaining $s_x$.

The difference between our method and SpotLight is that we improve the selection of destination subsets. After randomly selecting destinations (focus-plots) of sketch $K_j$, we aim to find the subset of $\mathcal{D}_j$, denoted as $\mathcal{V}_j$, which maximally explains the subset of source (nodes) $\mathcal{S}_j$. The gains of adding each focus-plot is then computed by marginal gains \cite{DBLP:journals/corr/abs-1710-05333} as follows:
\begin{equation}
    \Delta_f(d|\mathcal{V}^i_j) = f(\mathcal{V}^{i-1}_j \cup \{d\}) - f(\mathcal{V}^{i-1}_j)
\end{equation}
As shown in line 5-8 of Algorithm~\ref{algo:aaar}, we iteratively add the destinations that maximize the marginal gain for $v$ iterations:
\begin{equation}\label{eq:max}
    \mathcal{V}^i_j = \mathcal{V}^{i-1}_j \cup \{{\max_{d \in \mathcal{D}_j \backslash \mathcal{V}^{i-1}_j}}{\Delta_f(d|\mathcal{V}^i_j)}\}
\end{equation}
where $i = 1, ..., v$ and $j = 1, ..., l$. This ensures that the $v$ selected focus-plots have the best performance on explaining the nodes.

\subsubsection{Change Point Detection and Attention} \label{par:changepoint}
Having the sketch for each timestamp, we use \cite{akoglu2010event} to measure distance between current and past behavior as the changing behavior of graphs. In Algorithm~\ref{algo:aaar} line 10-13, at timetick $i$, the behavior of past $t$ timeticks can be summarized by computing the principal eigenvector $e$ of the matrix $K_{i-t:i-1}$ sketches by Singular Value Decomposition (SVD). We use cosine distance to measure the difference as the change score, and identify the dimension with the largest relative change. After finding that dimension, we have a subset of nodes and $v$ focus-plots, where $v$ is the number of plots to output, which give the most informative explanation to this changed behavior.

\subsubsection{Complexity Analysis}
\begin{lemma}
\methodfor takes time
\begin{equation}
    O(|G|(q^{2}f\varphi\log{\varphi}+tl))
\end{equation}
where $f$ and $\varphi$ denote the number of trees and the max sample size in Isolation Forest, and $|G|$ is the number of graphs over time.
\end{lemma}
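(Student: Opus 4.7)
The plan is to bound the cost of each of the three subroutines of Algorithm~\ref{algo:aaar} separately (graph construction, sketch generation, change point detection), then sum them and check that the dominant terms match the claimed expression $O(|G|(q^{2}f\varphi\log\varphi + tl))$. Because the loop over timesteps wraps essentially every per-timestep computation, the factor $|G|$ should factor out naturally at the end.

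First I would analyze graph construction (lines 1--4). For each of the $|G|$ snapshots $G_i$, we form all pairs of node features, yielding $\binom{q}{2} = O(q^{2})$ focus-plots. On each focus-plot we run Isolation Forest once, which builds $f$ trees on samples of size at most $\varphi$; the standard cost per tree is $O(\varphi\log\varphi)$, giving $O(f\varphi\log\varphi)$ per focus-plot. Multiplying by the $O(q^2)$ focus-plots and $|G|$ snapshots yields an $O(|G|\,q^{2}f\varphi\log\varphi)$ contribution, matching the first term of the claim. The construction of the bipartite graph $W_i$ is linear in the number of edges written, which is absorbed here.

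Second I would handle sketch generation (lines 5--8). For each of the $l$ sketch dimensions we pick random source/destination subsets and greedily grow $\mathcal{V}^i_j$ for $v$ iterations using the marginal-gain rule. Each marginal-gain evaluation scans the currently selected focus-plots and reuses the bipartite edge weights already computed in stage one, so the amortized per-dimension work is bounded by a constant times the number of sketch edges touched. Summed over $l$ dimensions and $|G|$ timesticks, this is $O(|G|\,l)$ (treating $v$ as a small constant, as the paper does). This is dominated by the $O(|G|\,tl)$ term of the claim and so does not appear explicitly.

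Third I would handle change-point detection (lines 9--13). At each timetick $i$ we compute the principal left singular vector of the $t\times l$ sketch window $K_{i-t:i-1}$; using power iteration with a constant number of iterations, each pass costs a single matrix--vector product of cost $O(tl)$, and then the cosine distance between $e_i$ and $K_i$ is $O(l)$. Over the $|G|$ timeticks this gives $O(|G|\,tl)$, which is exactly the second term. Adding the three bounds gives $O(|G|(q^{2}f\varphi\log\varphi + tl))$, as claimed. The main obstacle I anticipate is arguing cleanly that the principal eigenvector can be extracted in $O(tl)$ rather than the worst-case $O(tl\min(t,l))$ cost of a full SVD; I would justify this by appealing to a fixed-iteration power method, which is the standard interpretation in the streaming change-point literature this paper cites.
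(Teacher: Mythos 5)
Your proposal follows essentially the same route as the paper's proof: it charges $O(q^{2}f\varphi\log\varphi)$ per snapshot for running Isolation Forest on each of the $\binom{q}{2}$ focus-plots, treats sketch generation as lower-order, and attributes $O(tl)$ per timetick to the eigenvector extraction, yielding the same total. The one place you go beyond the paper is in justifying why the principal eigenvector of the $t\times l$ window costs only $O(tl)$ (via a fixed-iteration power method) rather than the full-SVD cost; the paper simply asserts the $O(|G|tl)$ bound, so your added justification is a welcome tightening rather than a divergence.
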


\begin{proof}
Isolation Forest takes $O(f\varphi\log{\varphi})$ to run, where $f$ denotes the number of trees and $\varphi$ denotes the max sample size. Since each focus-plot needs to run Isolation Forest once, the complexity to construct a bipartite graph takes $O(q^{2}f\varphi\log{\varphi})$. For total $|G|$ graphs over time, the complexity of graph construction is $O(|G|q^{2}f\varphi\log{\varphi})$. Moreover, $O(l)$ is needed to generate the sketches, and $O(|G|tl)$ is needed to extract the principled eigenvectors by SVD. Therefore, the complexity of \methodfor is $O(|G|(q^{2}f\varphi\log{\varphi}+tl))$.
\end{proof}

\section{Experiments}
\label{sec:exp}




In this section, we present our experimental results to answer the following questions:
\begin{compactenum}[{Q}1.]
\item {\bf Effectiveness}: How well does \method work on real-world datasets?
\item {\bf Scalability}: How does \method's running time grow with input size?
\item {\bf Discovery}: What is the most interesting thing found by \method? (See Appendix~\ref{app:ar} and \ref{app:id})
\end{compactenum}

We implement \method in Python; experiments are run on a 3.2 GHz CPU with 256 GB RAM on a Linux server.

In our experiments, we use three real-world graph datasets. The Accounting dataset from an anonymous institution contains transactions and accounts from one company, which precisely reflects the money flow inside the company. In addition to the Accounting dataset, we include two more real-world time-evolving graph datasets. The Czech Financial dataset \cite{berka1999workshop} contains transactions between customers (inside the bank) and clients (outside the bank), and there is no interaction between customers and clients. The Enron Email dataset \cite{klimt2004enron} collects emails at Enron from about 1998 to 2002, and we used the data from 2001 for concision in our experiment. None of these datasets come with ground truth. The graphs we used in our experiments are described in the Table~\ref{tab:datasets}.

\begin{table}[ht]
\begin{center}
\begin{tabular}{c|c|c|c}
    \multicolumn{4}{l}{\textbf{Real-world Graph Datasets}} \\ \hline \hline 
    Name & Nodes & Edges & Time Span \\ \hline \hline
    Accounting & 254 & 285,298 & 01/01/2016 to 02/06/2017 \\
    Czech Financial & 11,374 & 273,508 & 01/05/1993 to 12/14/1998 \\
    Enron Email & 16,771 & 1,487,863 & 01/01/2001 to 12/31/2001 \\
  \hline \hline
\end{tabular}
\end{center}
\vspace{-0.1in}
\caption{The summary of real-world graph datasets used}
\label{tab:datasets}
\end{table}


\subsection{``Smurfing'' Detection}

\subsubsection{Setup}
To evaluate \methodsec, we inject anomalies in our Accounting and Czech datasets. In the Accounting dataset, we randomly pick one sender and one receiver, and then pick $n$ intermediaries. The interactions between intermediaries have been deleted to mimic ``Smurfing''. In the Czech dataset, sender and receiver are randomly picked from client accounts, and intermediaries are picked from customer accounts. This imitates ``Smurfing'' where money is imported from outside of the bank and transferred out. The interaction between customer accounts are randomly generated with the probability of 0.05\%. We injected the datasets with intermediaries from 10 to 50 respectively. Noise patterns with either more intermediaries or more interaction between intermediaries are randomly added to test robustness. For each number of intermediaries we run experiments for 10 times to ensure low variance. Because of no existing baseline, so we run an ablation study like evaluation instead. Baseline \methodsec w/o MDL and \methodsec w/o Purity only contain one of the two ingredients in score function (purity or encoding cost). Baseline number of intermediaries directly detects ``Smurfing'' by outputting the pattern with the highest number of intermediaries.

\begin{figure}[!htb]
\centering
\subfloat{\label{fig:EFFACC}\includegraphics[scale=0.36]{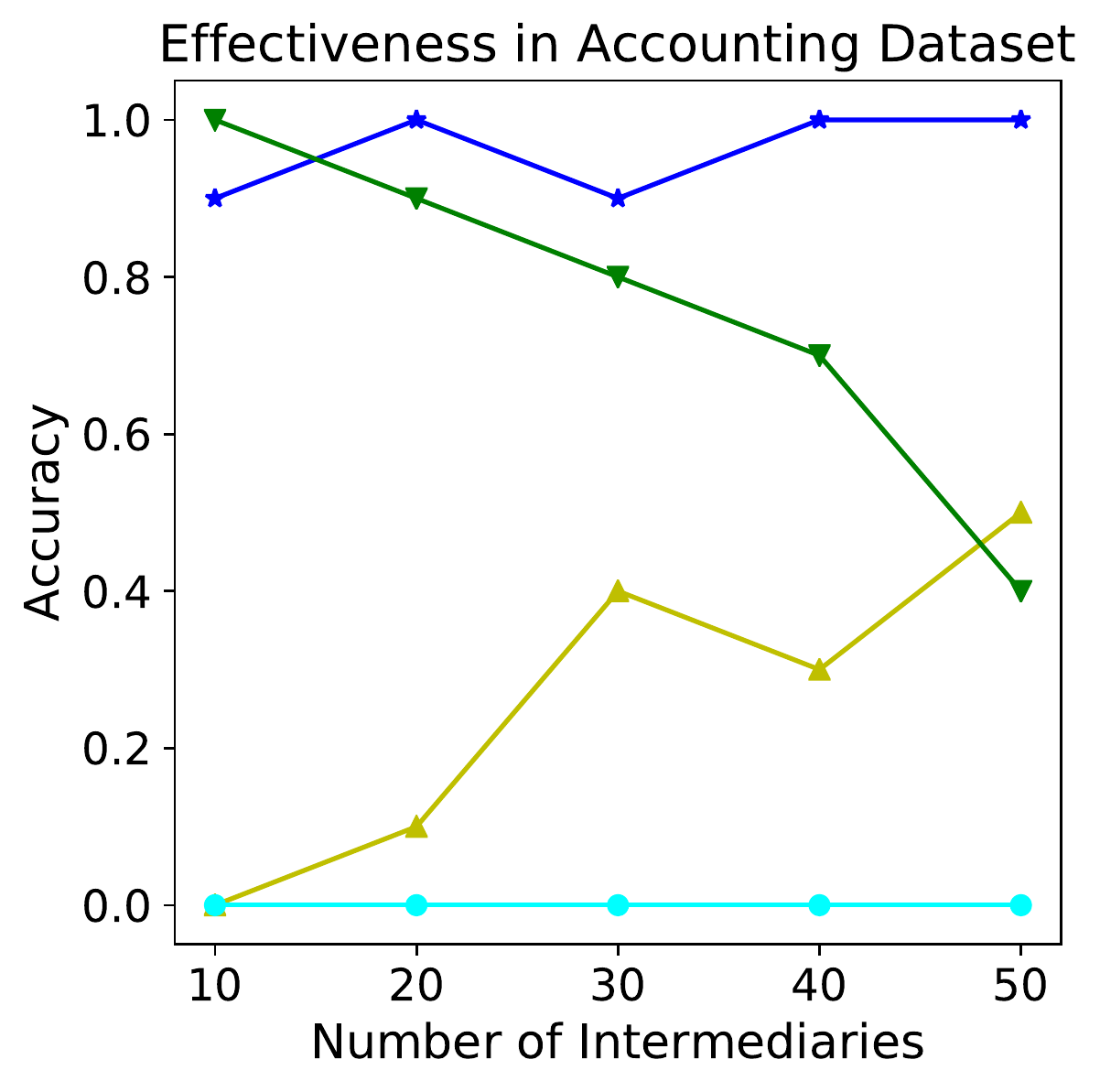}}
\subfloat{\label{fig:EFFCDF}\includegraphics[scale=0.36]{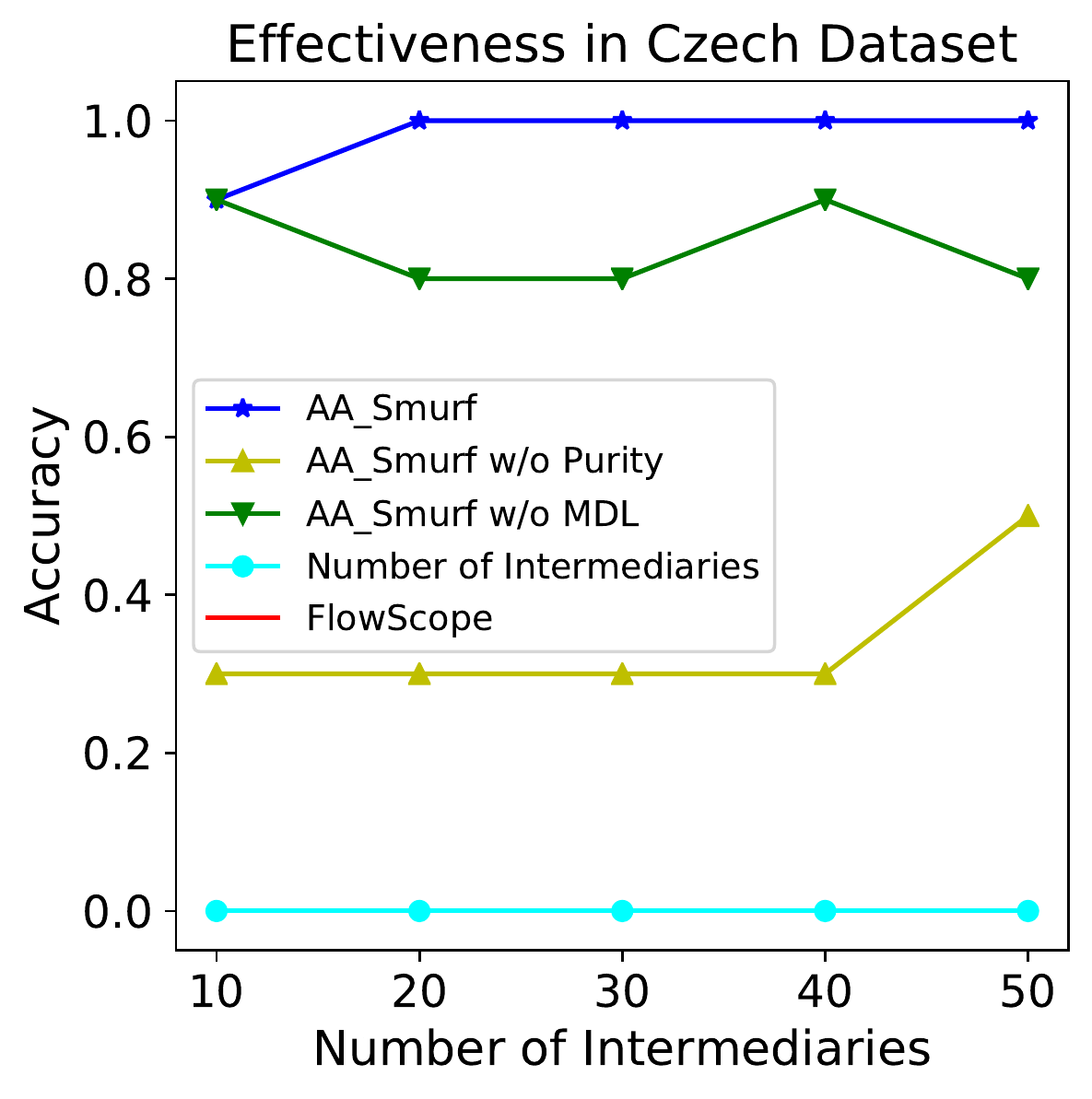}}
\vspace{-0.1in}
\caption{{\bf \methodsec}: \methodsec achieves nearly 100\% accuracy and outperforms other baselines in Accounting Dataset and Czech Dataset}
\end{figure}



\subsubsection{Effectiveness}
As shown in Figure~\ref{fig:EFFACC} and Figure~\ref{fig:EFFCDF}, \methodsec successfully detects ``Smurfing'' patterns in both Accounting and Czech datasets and outperforms other baselines by achieving nearly 100\% accuracy. Baseline number of intermediaries performs the worst, indicating the necessity of designing a ``Smurfing'' detection method. Moverover, FlowScope \cite{liflowscope} is not usable in our problem because they enabled intermediaries can have many interactions. In Figure~\ref{fig:f1}, we also can see \methodsec reorders the matrix by putting the most suspicious patterns in the front, so that they can be more easily noticed by auditors and risk management professionals.




\subsubsection{Scalability}
To evaluate the scalability of \methodsec, we vary the number of possible index sets, which highly correlate with running time. As shown in Figure~\ref{fig:aasmurfsca1} with black triangles, \methodsec scales linearly with the number of possible index sets $\mathcal{P}$ with only 1 machine. We also parallelized it by 4 machines to get a much higher efficiency, shown in Figure~\ref{fig:aasmurfsca1} with red squares.
To demonstrate that \methodsec is highly parallelizable, we vary the number of machines and report $\frac{T_1}{T_M}$, where $T_1$ denotes the running time by 1 machine, and $M$ denotes the number of machines. In Figure~\ref{fig:aasmurfsca2}, we find that it achieves a 4.5 times speedup by only 12 machines.

\begin{figure}[h]
\centering
\subfloat{\label{fig:aasmurfsca1}\includegraphics[scale=0.36]{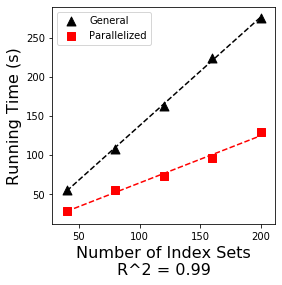}}\qquad
\subfloat{\label{fig:aasmurfsca2}\includegraphics[scale=0.5]{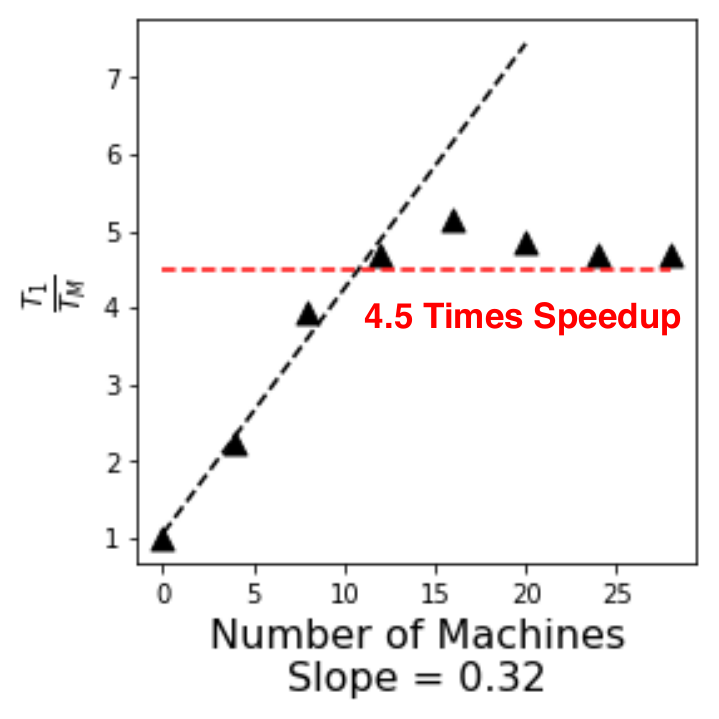}}
\vspace{-0.1in}
\caption{{\bf \methodsec}: It scales linearly with number of index sets by one machine (left), and speeds up if running with more than 1 machine (right).}
\label{fig:aasmurfsca}
\end{figure}


\subsection{Attention Routing}

\subsubsection{Setup}
In this experiment, graphs are separated along time by 14-day time window with 50\% overlap, and then we extract all the following features from graphs: 
\begin{enumerate*}[label=(\roman*)]
  \item Unique In Degree
  \item Multi In Degree
  \item Unique Out Degree
  \item Multi Out Degree
  \item Total In Weight
  \item Mean In Weight
  \item Median In Weight
  \item Variance In Weight
  \item Total Out Weight
  \item Mean Out Weight
  \item Median Out Weight
  \item Variance Out Weight, 
\end{enumerate*}
where multiplicity denotes repetition. These features have been transformed into logarithm scale before being input into Isolation Forest. The dimension of each sketch is set to be 256.

\subsubsection{Effectiveness}
To test our effectiveness, we test \methodfor on the Enron Email dataset, with which we could access accurate events on the news timeline. In Figure~\ref{fig:ar3}, we find that the change score fluctuates over time. The red dot in Figure~\ref{fig:ar3} corresponds to the time when Jeff Skilling resigned after releasing the August 2001 quarterly financial report and Kenneth Lay returned to being the chief executive. In Figure~\ref{fig:f3}, the unique out degree and median out weight of Kenneth Lay increased dramatically in a short time, reflecting this to be an important event. Figure~\ref{fig:ar3} (red box) shows the period that Enron declared bankruptcy. As shown in Figure~\ref{fig:ar4}, Karen Denne, the spokeswoman of Enron, have been busy handling media interviews, explaining the remarkable increase of the total out weight and multi out degree.

\begin{figure}
\centering
\subfloat[The change score over time, where the significant changes are shown in red point and red box]{\label{fig:ar3}
\vspace{-0.1in}
\includegraphics[scale=0.23]{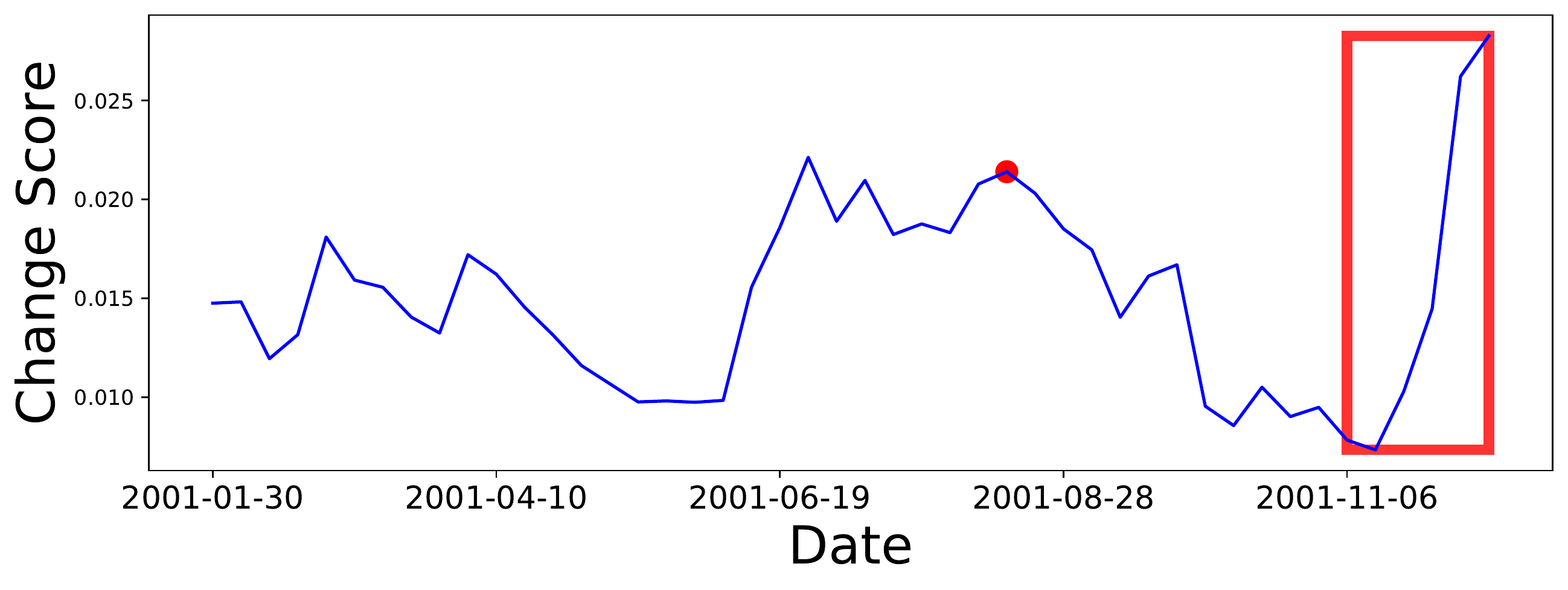}} \\
\vspace{-0.1in}
\subfloat[Karen Danne had became more active after the company declared bankruptcy, reflecting on her email length and number]{\label{fig:ar4}\includegraphics[scale=0.23]{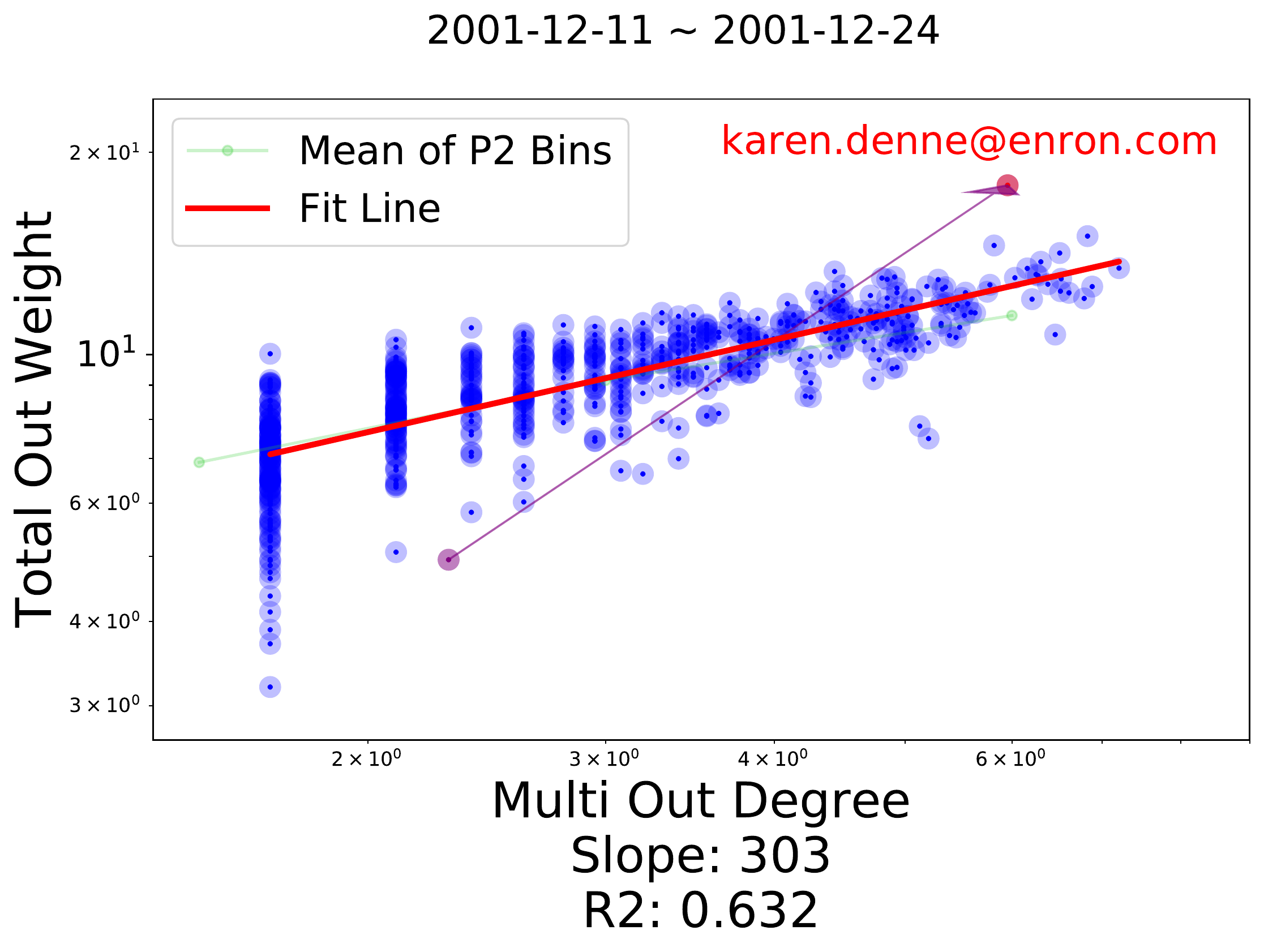}}
\caption{{\bf Attention Routing}: In Enron dataset, we capture the change when the CEO returned with company's bankruptcy}
\vspace{-0.15in}
\label{fig:ARR1}
\end{figure}

\subsubsection{Scalability}
To evaluate the scalability of \methodfor, we empirically vary the number of focus-plots and the dimension of sketch. As shown in Figure~\ref{fig:aaarsca1} and Figure~\ref{fig:aaarsca2} with black triangles, \methodfor scales linearly with both features while only using 1 machine; the efficiency boosts up after parallelizing it by 4 machines, shown with red squares.
Furthermore, we vary the number of machines to show that it is easy to parallelize and for 4 times speedup by 8 machines.




\vspace{-0.15in}
\begin{figure}[!htb]
\centering
\subfloat{\label{fig:aaarsca1}\includegraphics[scale=0.27]{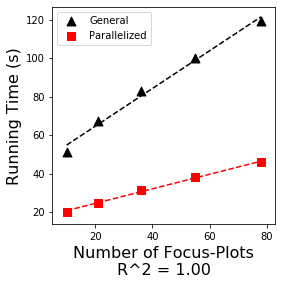}}
\subfloat{\label{fig:aaarsca2}\includegraphics[scale=0.27]{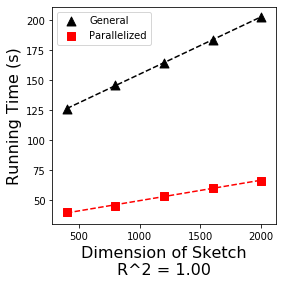}}
\subfloat{\label{fig:aaarsca3}\includegraphics[scale=0.37]{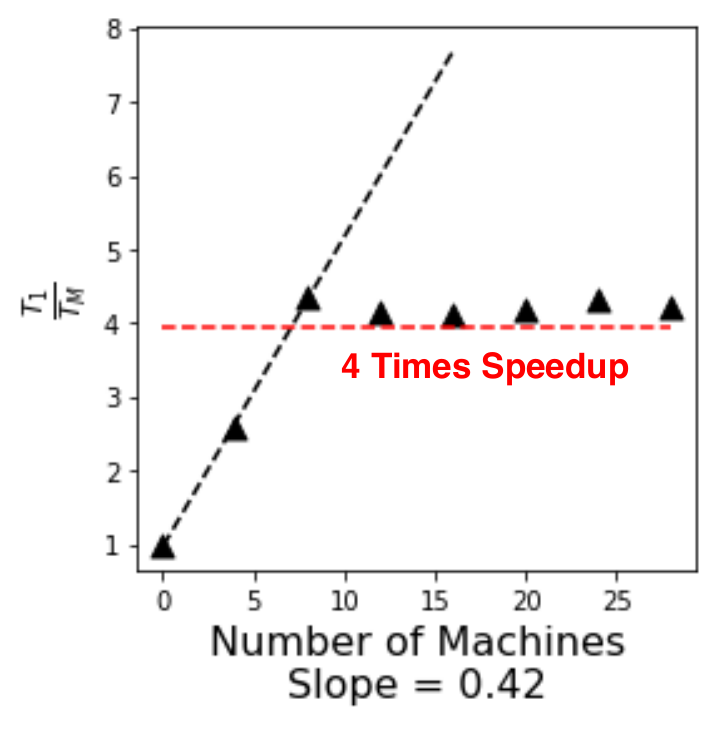}}
\caption{{\bf \methodfor}: It scales linearly with number of focus-plots (left) and dimension of sketch (middle) by one machine, and speeds up if running with more than 1 machine (right).}
\label{fig:aaarsca}
\end{figure}

\section{Conclusions}
\label{sec:concl}


We present \method (AA), which addresses the anomaly detection problem on time-evolving accounting datasets. This kind of data is usually complicated and hard to organize. Our main purpose is to automatically spot anomalies, such as money laundering, providing huge convenience for auditors and risk management professionals. Our approach is also general enough to be easily modified to solve problems in different domains. The following are our major contributions:
\ben
\item \methodsec, a parameter-free and unsupervised method to detect ``Smurfing'' type money laundering.
\item \methodfor, attend to the most suspicious plots and accounts in time-evolving graphs.
\item \methodthi, discover month-pairs with high correlation, proved by ``success stories'', and patterns in accounting dataset follow Power-Laws in log-log scales.
\item Scalability with running time linear in input size, and generality on other real-world graph datasets.
\een

\vspace{-0.02in}


\section*{Acknowledgment}
This research was partially supported by Ministry of Science and Technology Taiwan under grant no. 109-2218-E-009-014.

\bibliographystyle{IEEEtran}
\bibliography{BIB/ref}

\newpage
\appendix



\subsection{Background and Related Work}
\label{sec:background}




\subsubsection{Anti-Money Laundering (AML)}
Money laundering has been a long-term trouble to the modern financial system. Many scholars and researchers have tried to combat ``Smurfing''. As shown in Figure ~\ref{fig:smurfing}, it contains three major stages: placement, layering and integration. Placement denotes injecting a huge amount of money from illegal activities to financial systems; layering separates the money into several parts to avoid Suspicious Activity Report. Finally, these money will be integrated into the target account.

On one hand, some studies used transaction-based AML to detect ``Smurfing''. \cite{chadha2018handling} used Hadoop MapReduce to efficiently examine the transactions, and transactions that exceeds the fine-tuned thresholds will be detected. \cite{paula2016deep} used AutoEncoder to calculate the reconstruction error for anomaly detection. The higher the error, the more suspicious the transaction. However, these methods ignore the dependence between accounts, so that they are more likely to trigger false alarms.

On the other hand, considering transactions as a graph is a perfect solution for clarifying interactions between accounts. \cite{savage2016detection} combined network analysis with supervised learning to detect the groups of money laundering activities. Nevertheless, these methods highly rely on carefully labelled data, where the availability is low and the cost is significant. Thus, \cite{liflowscope} proposed to approach AML as an optimization problem by using multipartite graphs. Nevertheless, it requires a lot of senders and receivers. To ameliorate the aforementioned problems, we propose a more practical definition, and carry out the result with unsupervised method.

\subsubsection{Anomaly Detection in Time-Evolving Graphs}
Anomaly detection is one of the closest real-world applications in graph mining, which has been well-studied in the past few years. Time-evolving graphs are dynamic and sudden changes are usually inklings of potential anomalies. Careful scrutiny of this field can be found in \cite{ranshous2015anomaly}. In addition, \cite{eswaran2018spotlight} proposed a randomized sketching-based approach to detect sudden changes in streaming graphs, and theoretically proved that anomalies lie far apart from normal instances. StreamSpot \cite{DBLP:conf/kdd/ManzoorMA16} is a clustering-based method that detects anomalies in streaming heterogeneous graphs with dynamic maintenance. An algorithm is proposed to detect changing points in time-evolving graphs by the time series feature extracted from all nodes \cite{akoglu2010event}. \cite{eswaran2018sedanspot} proposed their method based on two suggested signs of anomalous edges; sudden occurrence and connection with sparsely-connected parts.

\subsubsection{Attention Routing}
This term was first coined in \cite{chau2012data}, which is used to increase the quality of analyzing massive networks. The major purpose is to find the most critical nodes and subgraphs. \cite{DBLP:conf/www/PanditCWF07} designed a system to detect fraud users in a large online auction dataset, modeling users and transactions as a Markov Random Field. \cite{chau2011polonium} detected the malware in a tera-scale bipartite graph based on scalable Belief Propagation algorithm. LookOut \cite{DBLP:journals/corr/abs-1710-05333} enhanced the interpretability and succinctness, only showing a limited number of plots which maximize the anomaly score.



\subsection{\methodthi for Insight Discovery} \label{app:num2}
Effective audits require auditors to identify risks through understanding an entity---its environment, operation, and control activities.
The real-world challenge has been the lack of fast tools to process the massive amount of detailed transaction data and establish robust patterns regarding an entity's business operations and accounting practices.
Quickly discovering periodic temporal correlations in large accounting data is profoundly meaningful because it aids auditors in better accessing risk and detecting aberrant transactions.

\subsubsection{Problem Definition}
Temporal correlations usually exist in a time-evolving graph dataset, which are worth studying because they are useful in discovering the behavior rules of nodes. The nodes that exhibit unusual behavior can be treated as outliers. The distributions of different features in the dataset also give us abundant information. Here we define the problem in the following:
\begin{definition}[Insight Discovery]
Given a time-evolving graph dataset, there are two major goals to discover the insights of dataset:
\bit
\item unearth the correlation between time frames and the most significant factor incurring this phenomenon, and
\item examine the distributions to verify whether they follow Power-Law patterns.
\eit
\end{definition}

\subsubsection{Information Extraction and Visualization}  \label{ssec:tra}
SVD is a method of matrix decomposition which is extremely useful in extracting essential information. It decomposes the matrix into $U$, the columns containing left singular vectors, $\Sigma$ with singular values on its diagonal, and $V^T$ containing right singular vectors. To further investigate the correlation between months, we form a 2-dimensional matrix as shown in Figure~\ref{fig:mmm}, where the first dimension is the pairs of the source and destination accounts, and the second dimension is months. We use SVD to decompose this matrix to derive three matrices $U$, $\Sigma$, and $V^T$. The rows of $V^T$ illustrate correlations between each month, and the corresponding columns of $U$ explain the correlations according to the behaviors of each pair. Therefore we can project rows and columns on a 2-dimensional plane, and the lower the distance between the elements of rows and columns, the higher the relevance. This visualization can be used to discover active months of each account pair.

\begin{figure}[hbtp]
\centering
\includegraphics[width=0.3\textwidth]{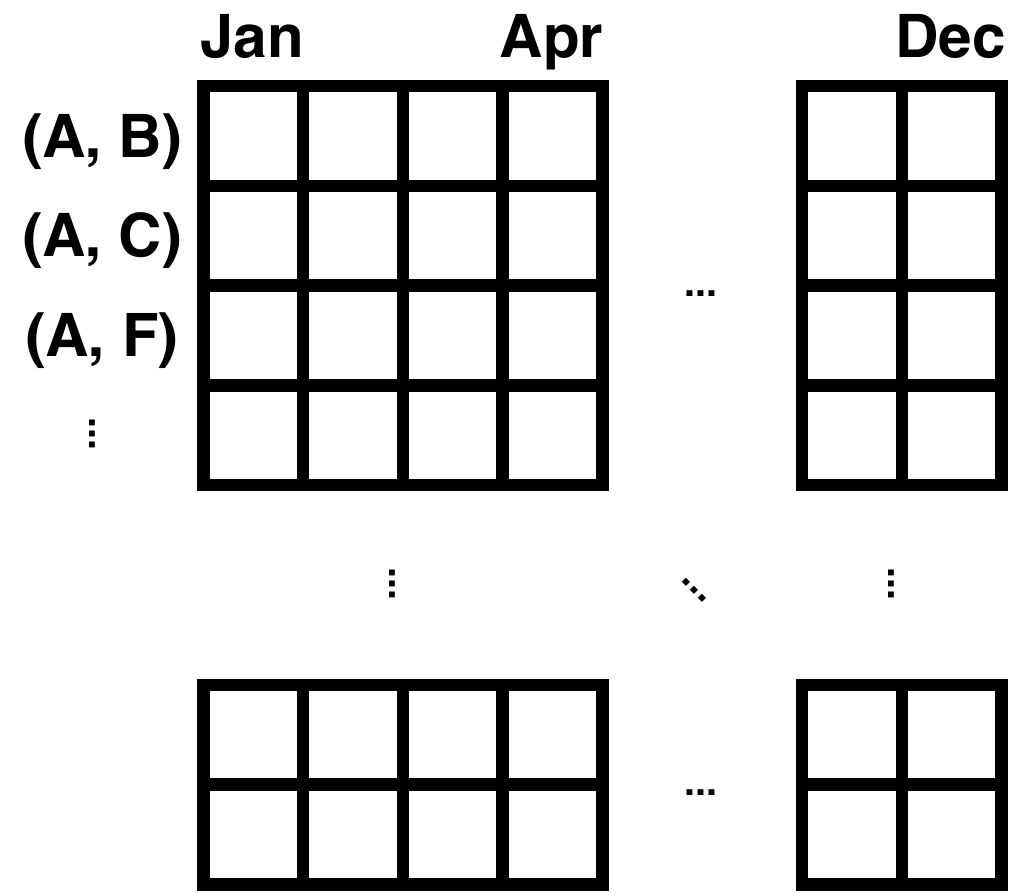}
\caption{A 2-dimension matrix where the first dimension is the pair of accounts, and the second dimension is months}
\label{fig:mmm}
\end{figure}

\subsubsection{Power-Law Pattern} \label{par:powerlaw}
Log-logistic distribution has been well studied and used in economics for many years, which reflects how the ``rich get richer.''
A continuous random variable follows the log-logistic distribution, if and only if its logarithm follows the logistic distribution \cite{DBLP:conf/asunam/DevineniKFF15}. One of the best properties of the log-logistic is that its odds ratio follows the Power-Law, which is the ratio of Cumulative Distribution Function (CDF) over the complementary CDF (CCDF). We use Linear Regression to fit odds ratio and calculate the slope $\rho$. We then derive the equation as follows:
\begin{equation}
    \beta = -\rho
\end{equation} \label{equ:oddsratio}
where $\beta$ denotes the slope of CCDF and $\rho$ denotes the slope of odds ratio. This equation is used to examine whether a distribution follows the Power-Law in the experiments.

\subsection{Additional Experiment Results for ``Smurfing'' Detection}
To implement \methodsec on streaming graphs, we further calculate the compression rate by encoding cost over time. As shown in Figure~\ref{fig:cr}, \methodsec handle streaming graphs well, where it successfully detects ``Smurfing'' pattern by giving the highest compression rate over time.

\begin{figure}[hbtp]
\centering
\includegraphics[width=0.45\textwidth]{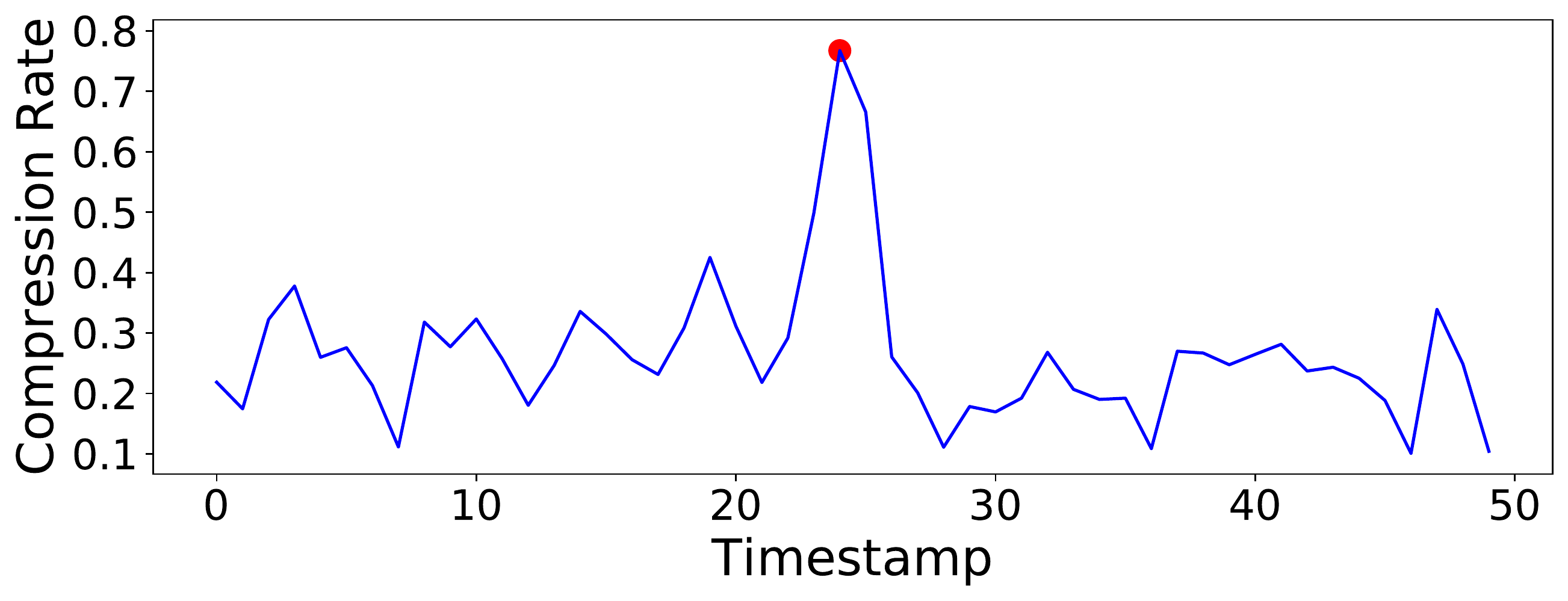}
\caption{{\bf \methodsec}: \methodsec detects ``Smurfing'' in the time-evolving dataset on the red point with high compression rate}
\label{fig:cr}
\end{figure}

\subsection{Additional Experiment Results for Attention Routing} \label{app:ar}
We use \methodfor to investigate activities in the Accounting dataset. As shown in Figure~\ref{fig:ar1}, the change score reaches the highest peak at the detected red point in the last week of April. Figure~\ref{fig:ar2} further shows during this focal week, the change of the Cost of Good Sold (COSG) account 44810 is significant in terms of median and total out weight. The COSG account moves from the purple dot to the red dot, where the purple dot summarizes the account behavior during the four weeks before the focal week, and the red dot describes the account behavior in the focal week. During previous four weeks, the median and total out weight do not equal zero (purple dot), which indicates a small movement of the COGS account in terms of the account balance. The median and total out weight then increase dramatically in the focal week (red dot), suggesting the occurrence of exceptionally significant movement concerning the COGS account balance. According to Figure~\ref{fig:ar2}, similar implications can be drawn for the Finished Goods Inventory account 13381.

\begin{figure}
\centering
\subfloat[The change score over time, where the significant change is shown in red point]{\label{fig:ar1}\includegraphics[scale=0.23]{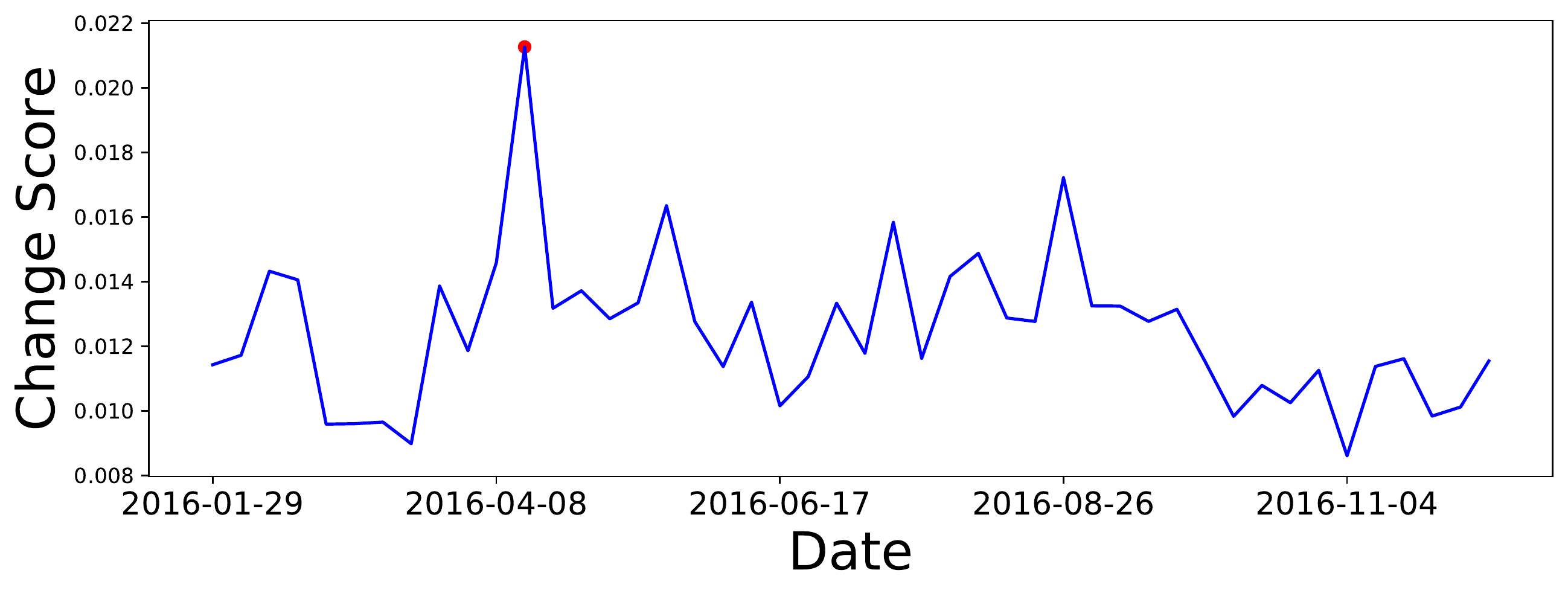}} \\
\subfloat[The explanation where account 44810 and 13381 had significant changes in terms of median and total out weight]{\label{fig:ar2}\includegraphics[scale=0.23]{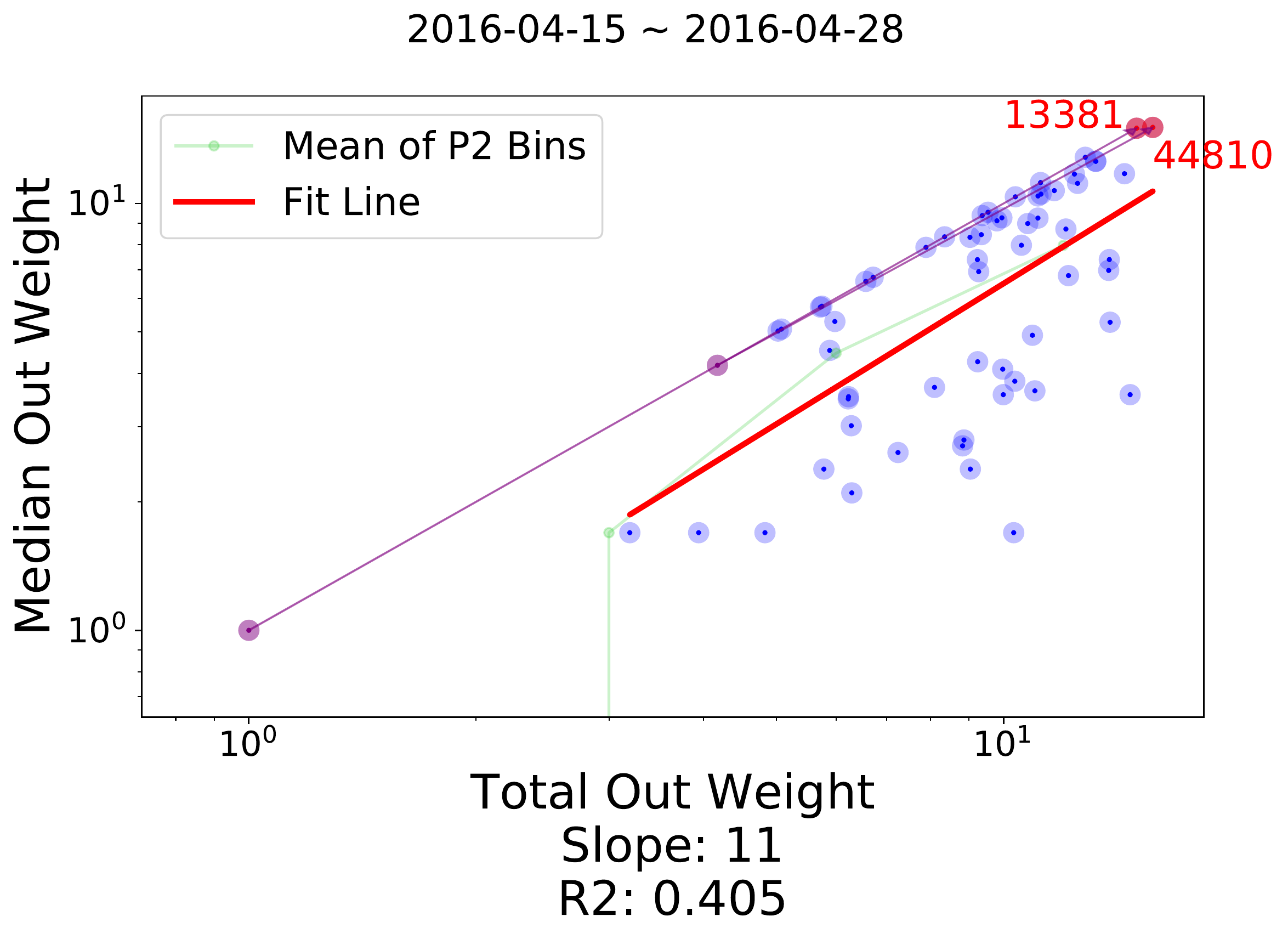}}
\caption{{\bf Attention Routing}: In Accounting Dataset, we discovery manufacturer's alternation of information system}
\label{fig:ARR2}
\end{figure}

\subsection{Additional Experiment Results for Insight Discovery} \label{app:id}
In this experiment, we selected the data from a general ledger database of an anonymous manufacturer in 2016.

\subsubsection{Temporal Correlation}
In order to further illustrate the factors causing this phenomenon, we extract the information by SVD described in Section~\ref{ssec:tra}. As shown in Figure~\ref{fig:svd}, $V^T_1$ and $V^T_2$ rows reflect the similarities between each month-pairs. We then project them into a 2-d plane marked by red triangles as shown in Figure~\ref{fig:cluster}. To further illustrate the relation between month-pairs and account-pairs, we project the corresponding $U_1$ and $U_2$ columns marked by dots on the same plane, where the colors are generated by K-Means. The clusters around month-pairs are the activated account-pairs in corresponding month-pairs, which we call them ``petals''. For example, the purple cluster are the account-pairs that were active only in January and June. The yellow cluster on the left of the plane denotes those account-pairs being in two month-pairs, i.e., (Jan, Jun) and (Mar, Oct). Moreover, the cluster in the middle of the plane is called ``stamen'', which contains months with no correlation. These ``petals'' and ``stamen'' give us extremely intuitive information to investigate correlations between months.

\begin{figure}[!t]
\centering
\subfloat[The rows of matrix $V$ containing right singular vectors]{\label{fig:svd}\includegraphics[scale=0.3]{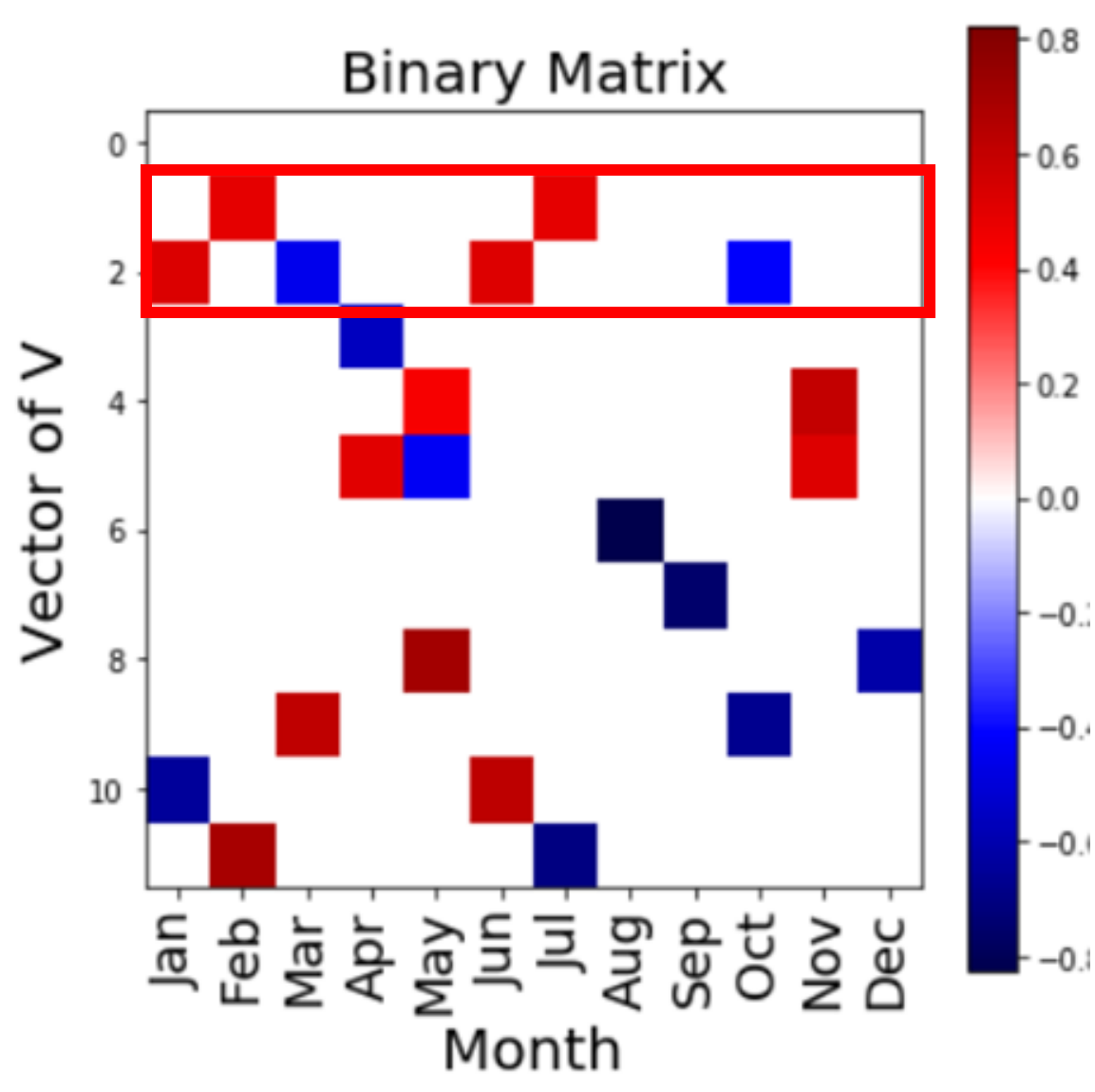}}\quad
\subfloat[The account pairs are gathered into clusters on $U_1$ and $U_2$ plane ]{\label{fig:cluster}\includegraphics[scale=0.3]{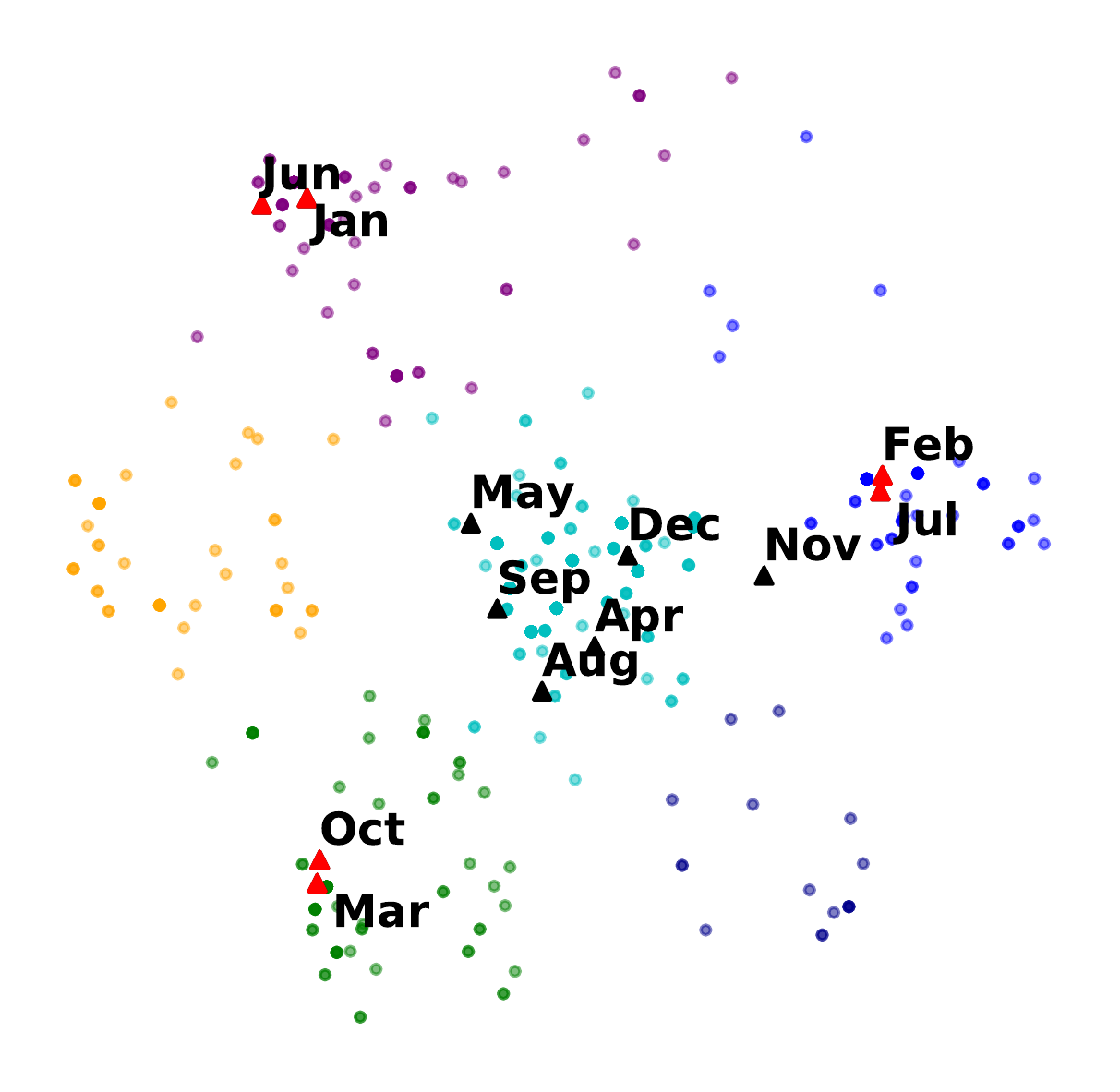}}
\caption{{\bf Temporal Correlation}: We discover the temporal correlation between months}
\label{fig:RAeffective}
\end{figure}

The anonymous manufacturer we investigate conducted onboard training of new employees in March and October. Typically, such events are held in the selected months of the year (and perhaps during the downtime of the employees affected). When such events happen, certain account-pairs are activated, such as the cost of issuing an award to the best-performing employees and a specific funding source is usually used for these HR events. As a result, the account-pair from the funding source to the award expense only takes place in the two months of the year. Meanwhile, this manufacturer set its intercompany reconciliations in March and October, possibly because of the relatively more intense waves of intercompany transactions in these months. These waves of intercompany transactions and scheduled reconciliations would trigger the pairings between intercompany accounts.
As shown in Figure~\ref{fig:cluster}, March and October are clustered as a similar month-pair.

\subsubsection{Patterns Following Power-Law} \label{sssec:pfpl}
In the accounting dataset, we draw several distributions of CCDF and Odds Ratio and surprisingly find that log-logistic distribution fits well. The features we extracted for computing distribution are as follows:
\begin{enumerate*}[label=(\roman*)]
  \item Transaction Amount
  \item Interval Arriving Time
  \item Pair Amount
  \item Pair Multiplicity
  \item Out Weight / 1000
  \item In Weight / 1000
  \item Multi Out Degree
  \item Multi In Degree
  \item Unique Out Degree
  \item Unique In Degree.
\end{enumerate*}
We use account-pair as the unit of distributions (\romannumeral 2) and (\romannumeral 3). Weight is divided by 1000 to eliminate small numbers. Multi degree means that the degree to one specific node could be more than one, where in a unique degree there will only be one.
As shown in Table~\ref{tbl:dist}, the results demonstrate the accuracy of Equation~\ref{equ:oddsratio}. Moreover, there are three essential exponents 1, 0.6, and 1.6, and all distributions in the Accounting dataset follow these three exponents. For distributions from (\romannumeral 1) to (\romannumeral 2), $\rho$ usually equals to 1; for the ones from (\romannumeral 3) to (\romannumeral 8), $\rho$ usually equals to 0.6; for the others from (\romannumeral 9) to (\romannumeral 10), $\rho$ usually equals to 1.6.

\begin{table*}[htbp]
    \begin{center}
    \begin{tabular}{ | c | c |}
    \hline
    \multicolumn{2}{| c |}{\textcolor{red}{$\beta = \rho \approx 1$}} \\
    \hline
    Transaction Amount & Interval Arriving Time \\
    \hline
    \includegraphics[scale=0.18]{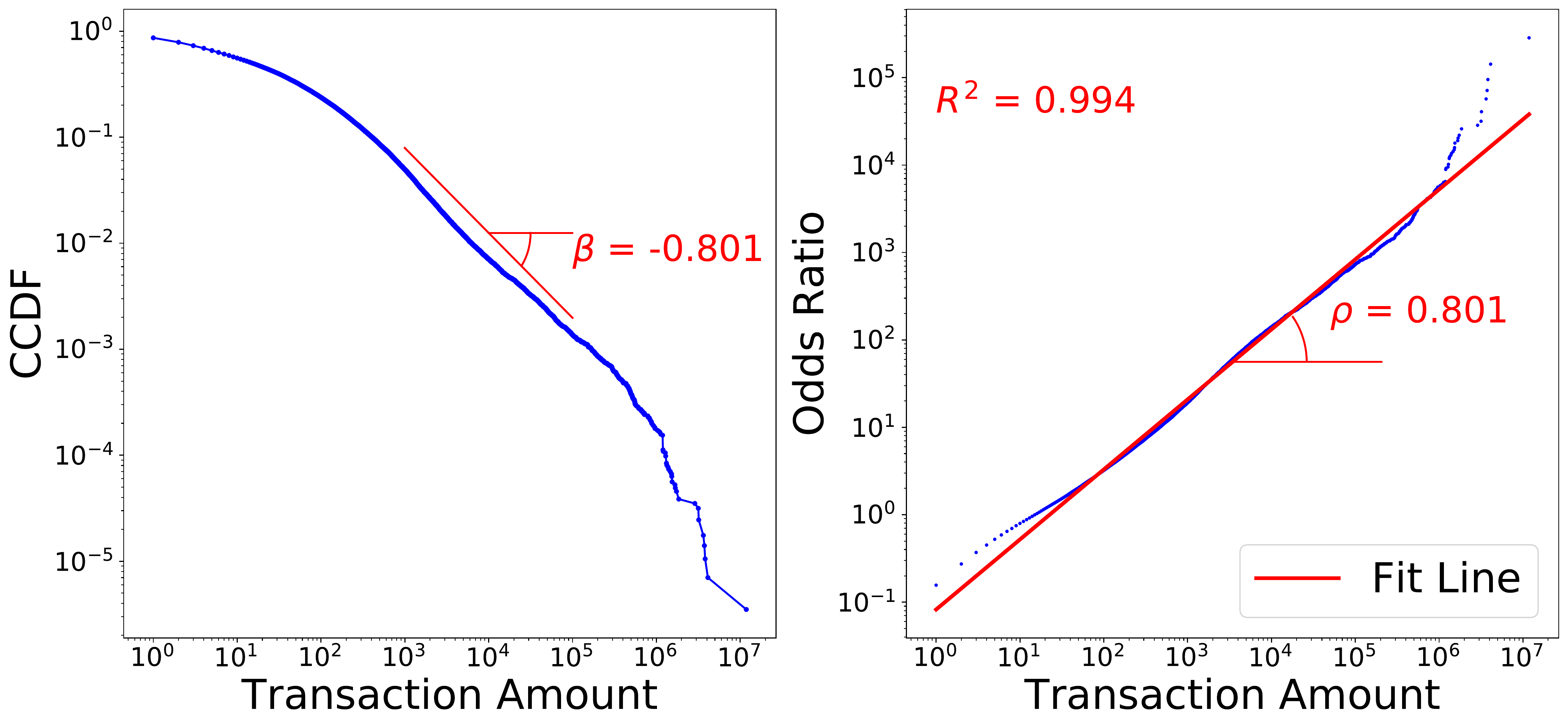}
    &
    \includegraphics[scale=0.18]{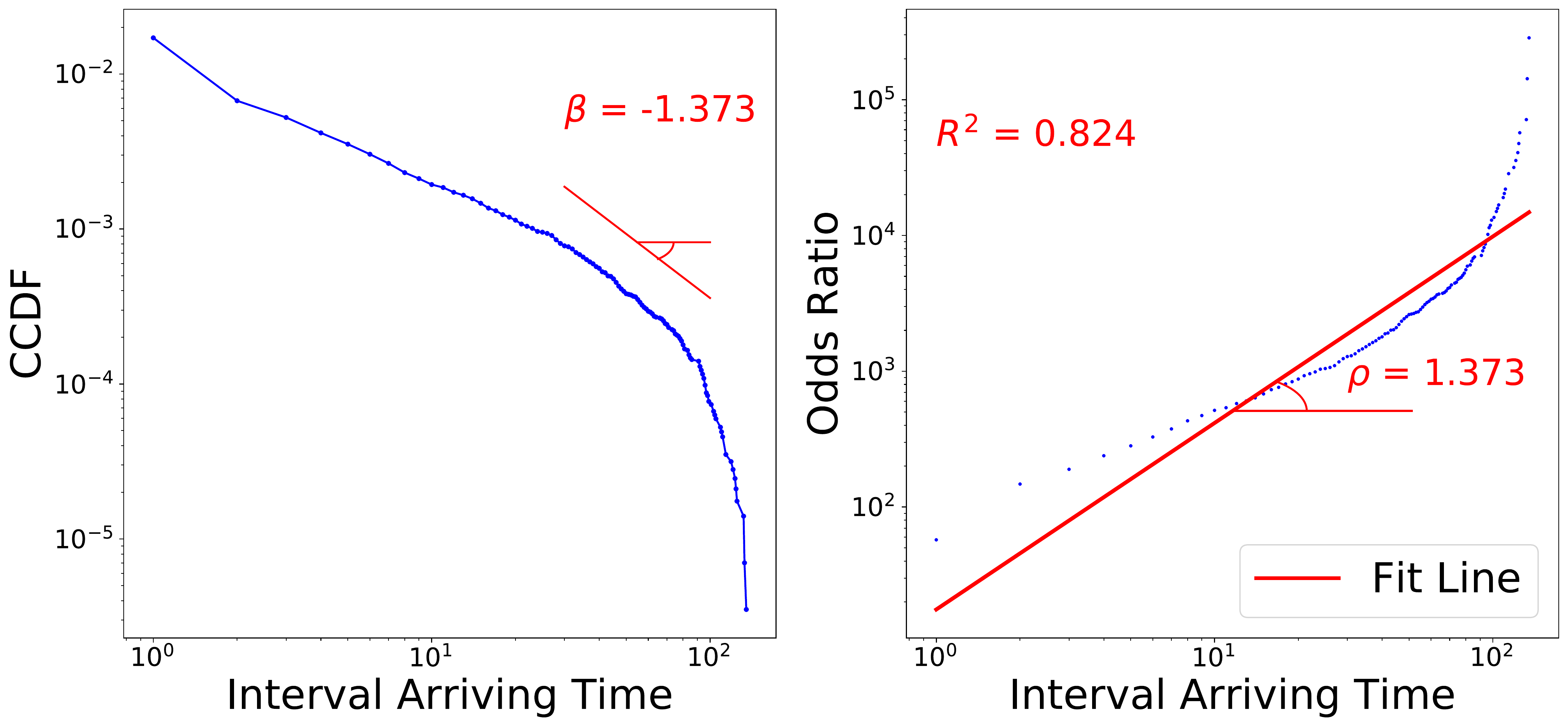}
    \\ \hline
    \multicolumn{2}{| c |}{\textcolor{red}{$\beta = \rho \approx 0.6$}} \\ 
    \hline
    Pair Amount & Pair Multiplicity \\
    \hline
    \includegraphics[scale=0.18]{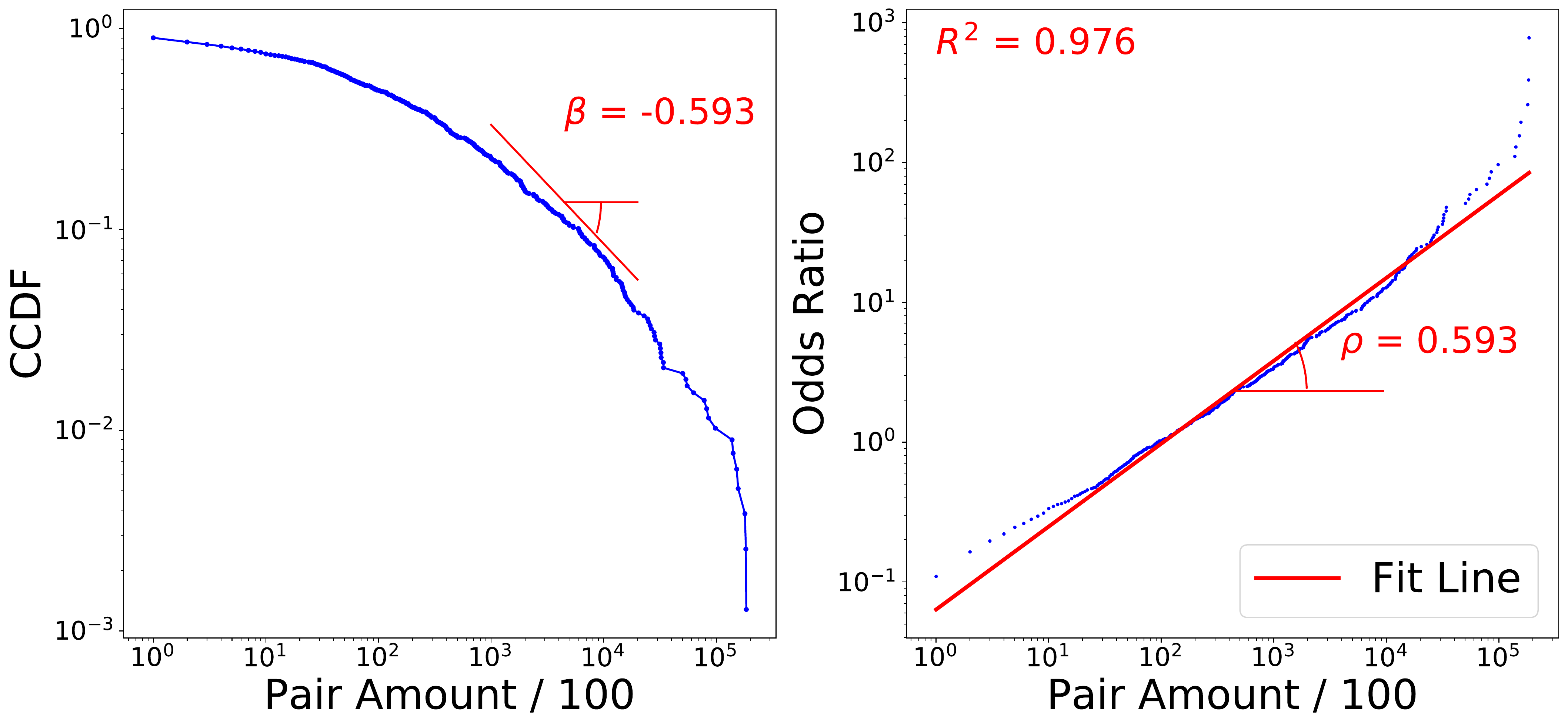}
    &
    \includegraphics[scale=0.18]{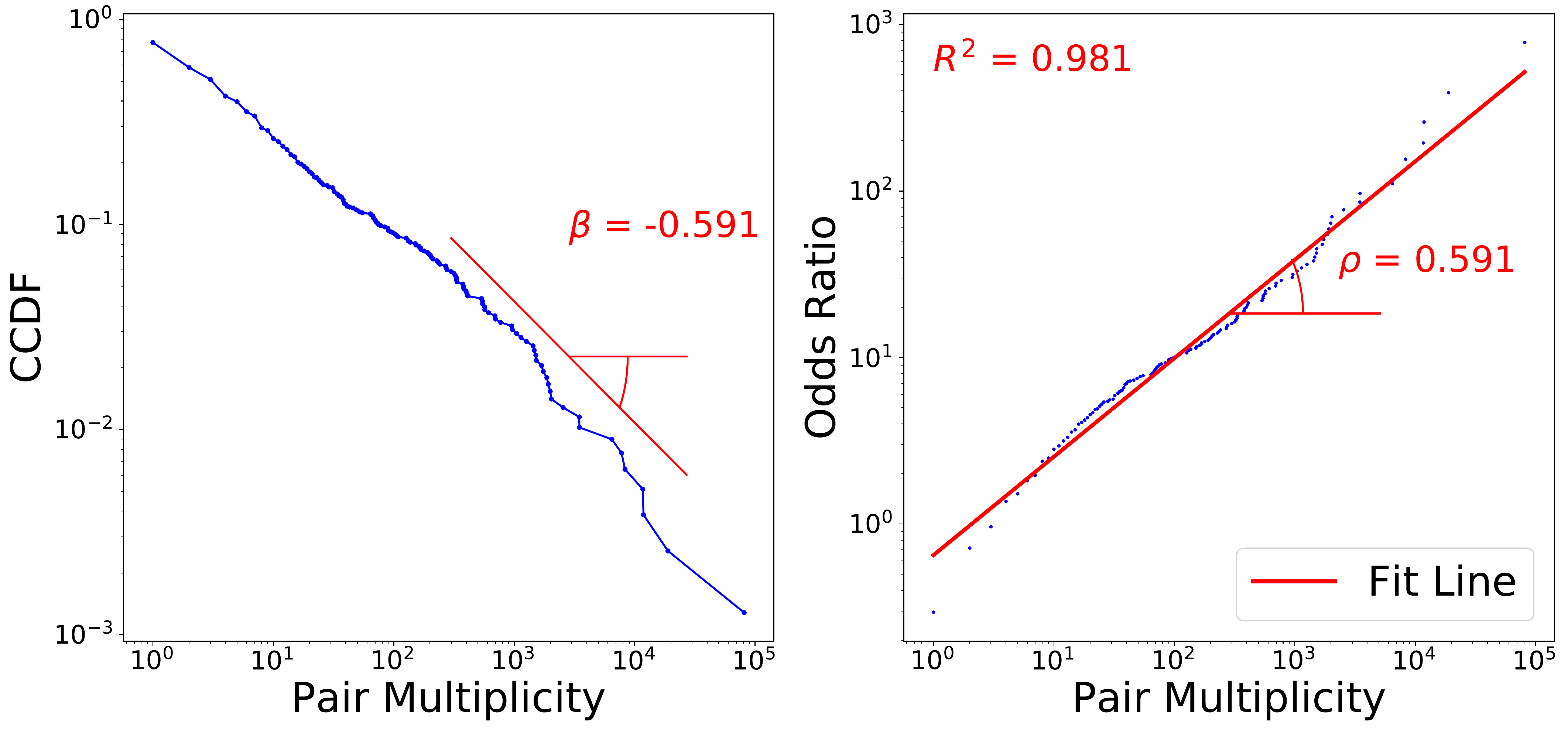}
    \\ \hline
    Out Weight / 1000 & In Weight / 1000 \\
    \hline
    \includegraphics[scale=0.18]{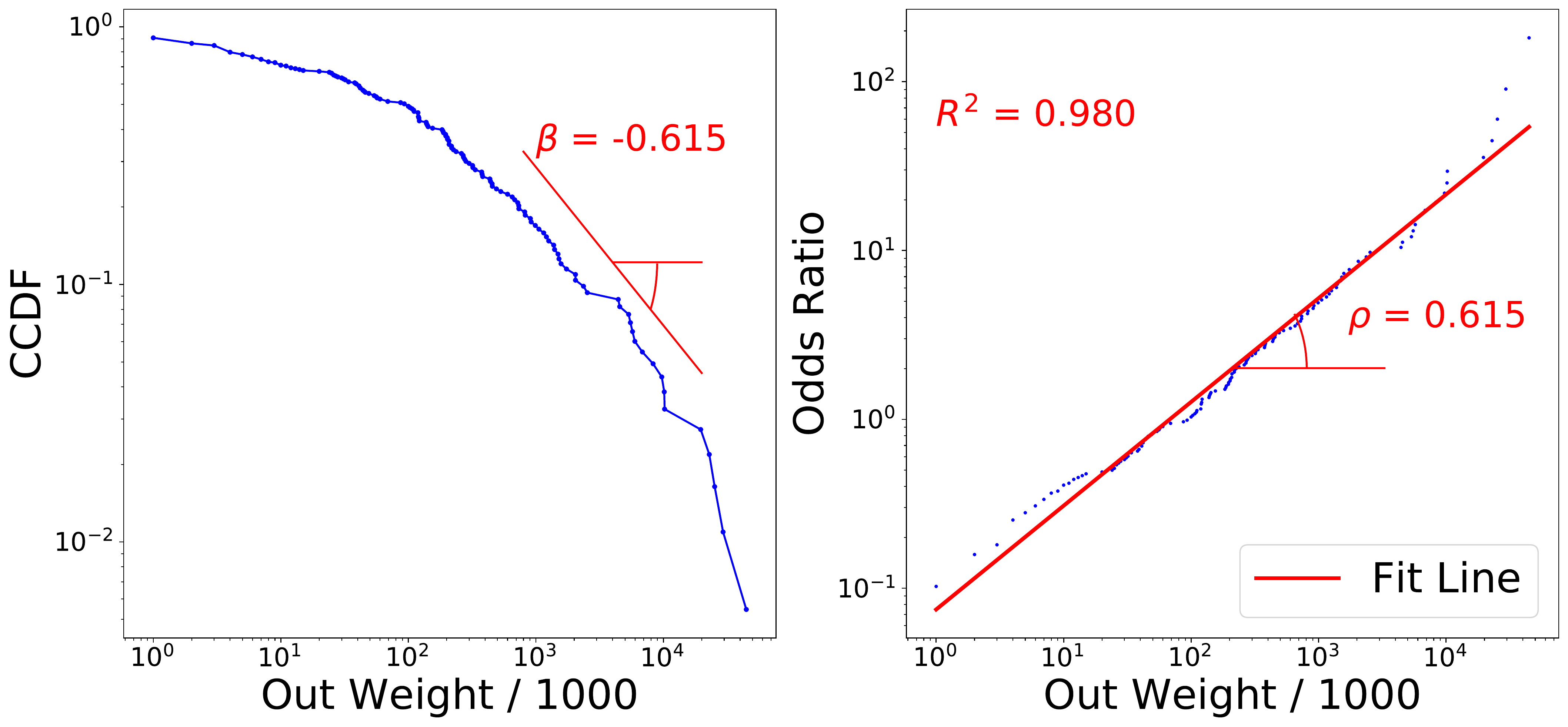}
    &
    \includegraphics[scale=0.18]{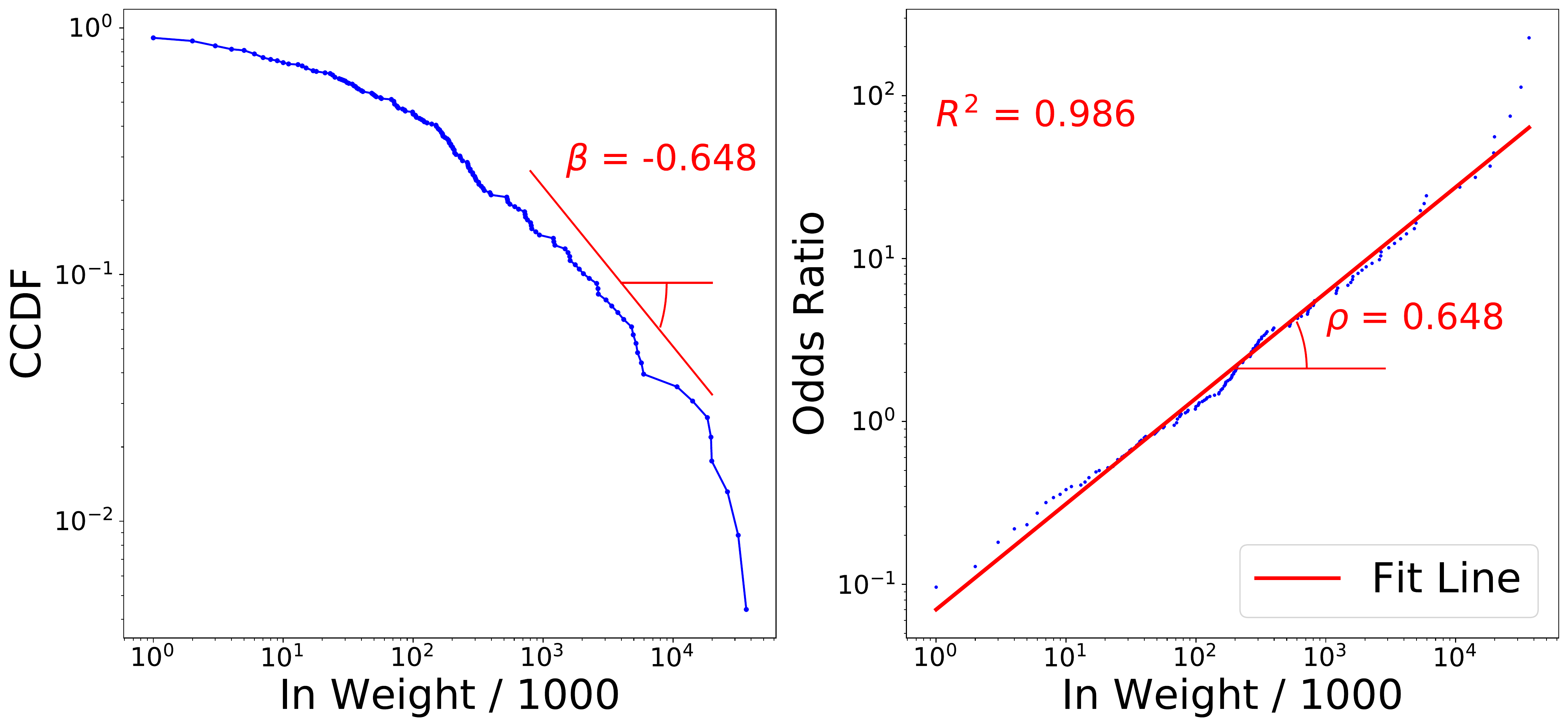}
    \\ \hline
    Multi Out Degree & Multi In Degree \\
    \hline
    \includegraphics[scale=0.18]{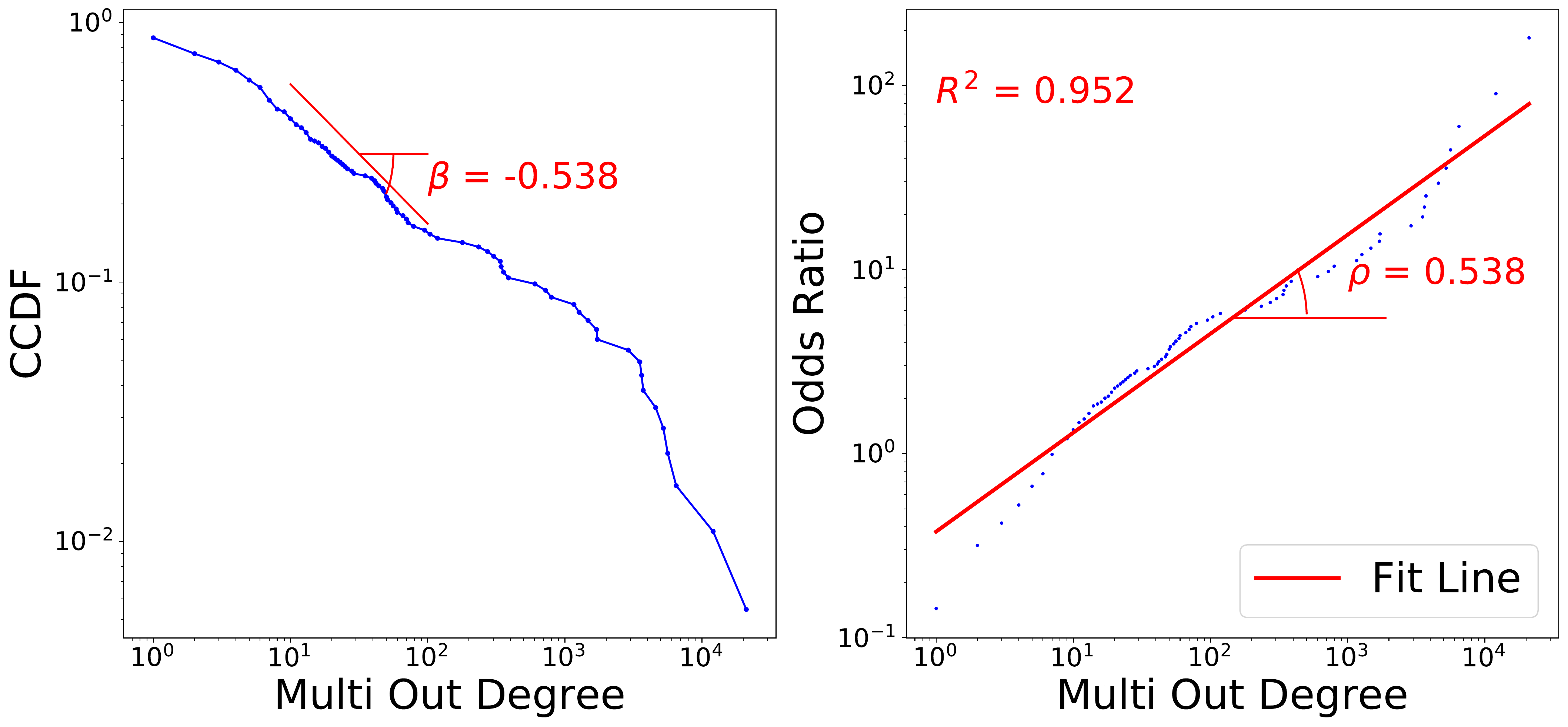}
    &
    \includegraphics[scale=0.18]{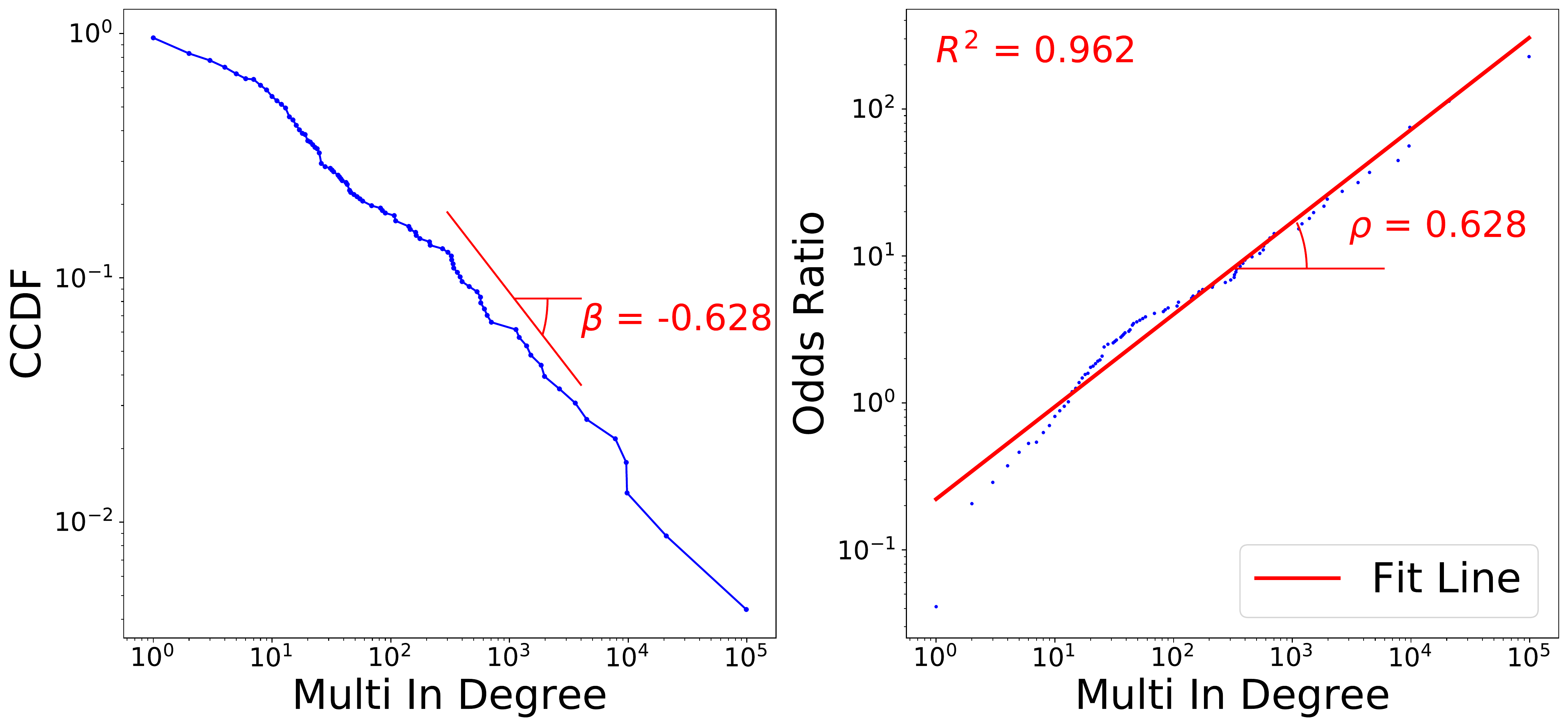}
    \\ \hline
    \multicolumn{2}{| c |}{\textcolor{red}{$\beta = \rho \approx 1.6$}} \\ 
    \hline
    Unique Out Degree & Unique In Degree \\
    \hline
    \includegraphics[scale=0.18]{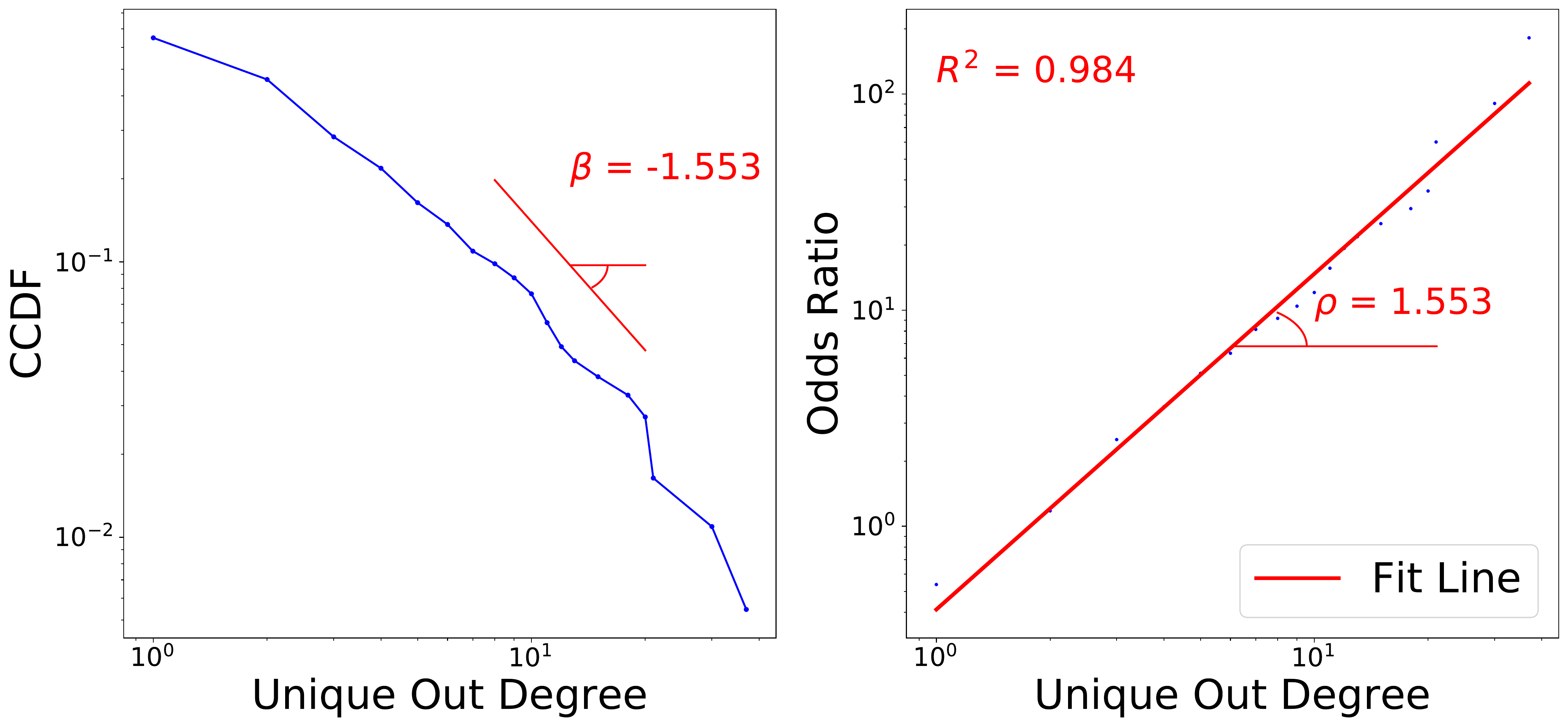}
    &
    \includegraphics[scale=0.18]{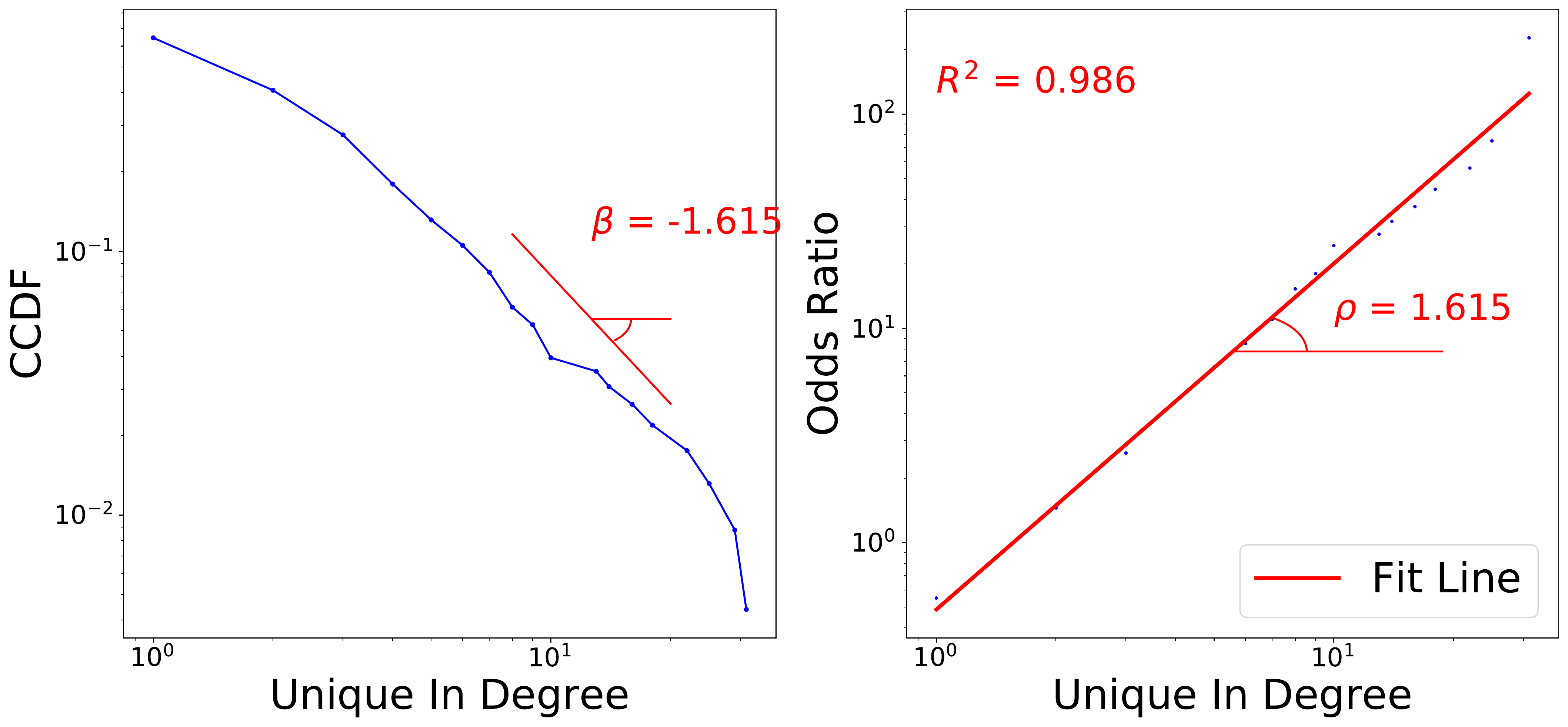}
    \\ \hline
    \end{tabular}
    \caption{{\bf \methodthi}: \methodthi discovers the patterns in accounting dataset follow Power-Law.}
    \label{tbl:dist}
    \end{center}
\end{table*}

\end{document}